\renewcommand{\arraystretch}{1.2}
\newcommand{\ra}[1]{\renewcommand{\arraystretch}{#1}}
   \definecolor{skel}{rgb}{1,0.5,0.5}
   \providecommand{\skel}[1]{{}}
\providecommand{\eqdef}{\overset{def}{=}}
\providecommand{\FTpair}{\overset{\mathcal F}{\longleftrightarrow}}
\providecommand{\Id}{\mathbb{I}}
\providecommand{\Cspace}{\mathbb{C}}
\providecommand{\Zspace}{\mathbb{Z}}
\providecommand{\Nspace}{\mathbb{N}}
\providecommand{\Esp}[1]{\mathbb{E}\left[#1\right]}
\providecommand{\mat}[1]{\boldsymbol{#1}}
\providecommand{\vct}[1]{\boldsymbol{#1}}
\providecommand{\w}{\omega}
\providecommand{\eq}{\:=\:}
\newtheorem{proposition}{Proposition}
\newtheorem{definition}{Definition}
\begin{document}

\title{Fast and Robust Parametric Estimation of Jointly Sparse Channels}

\author{Yann~Barbotin~\IEEEmembership{Student Member,}
        and~Martin~Vetterli,~\IEEEmembership{Fellow,~IEEE}%
\thanks{Y. Barbotin and M. Vetterli are with the faculty of Informatics and Communications at \'{E}cole Polytechnique F\'{e}d\'{e}rale de Lausanne, Switzerland}%
\thanks{This work has been submitted to the IEEE for possible publication. Copyright may be transferred without notice, after which this version may no longer be accessible.}%
\thanks{This research is supported by \emph{Qualcomm Inc.}, \emph{ERC Advanced Grant – Support for Frontier Research - SPARSAM Nr : 247006} and \emph{SNF Grant - New Sampling Methods for Processing and Communication Nr : 200021-121935/1}.}}

\markboth{\emph{submitted to} IEEE JOURNAL ON EMERGING AND SELECTED TOPICS IN CIRCUITS AND SYSTEMS}%
{BARBOTIN \& VETTERLI: Fast and Robust Estimation of Jointly Sparse Channels}

\maketitle

\begin{abstract}

 We consider the joint estimation of multipath channels
  obtained with a set of receiving antennas and uniformly probed in
  the frequency domain. This scenario fits most of the modern outdoor
  communication protocols for mobile access \cite{3gpplte} or digital
  broadcasting \cite{dvbt} among others.

Such channels verify a Sparse Common Support property (SCS) which was
used in \cite{Barbotin2012} to propose a Finite Rate of Innovation
(FRI) based sampling and estimation algorithm.
In this contribution we improve the robustness and computational complexity
aspects of this algorithm. The method is based on projection in Krylov subspaces to
improve complexity and a new criterion called the Partial Effective
Rank (PER) to estimate the level of sparsity to gain robustness.

  If $P$ antennas measure a $K$-multipath channel with $N$ uniformly sampled
  measurements per channel, the algorithm possesses an $\mathcal O(KPN\log N)$
  complexity and an $\mathcal
  O(KPN)$ memory footprint instead of  $\mathcal O(PN^{3}
 )$ and  $\mathcal O(PN^{2})$ for the direct
  implementation, making it suitable for $K\ll N$. The sparsity is
  estimated \emph{online} based on the PER, and the
  algorithm therefore has a sense of \emph{introspection} being able
  to relinquish sparsity if it is lacking.

  The estimation performances are tested on field measurements with synthetic
  AWGN, and the proposed algorithm outperforms non-sparse
  reconstruction in the medium to low SNR range ($\leq 0$dB),
  increasing the rate of successful
  symbol decodings by $1/10^{\mathrm{th}}$ in average, and
  $1/3^{\mathrm{rd}}$ in the best case. The experiments also show that
  the algorithm does not
  perform worse than a non-sparse estimation algorithm in non-sparse
  operating conditions, since it may fall-back to it if the PER
  criterion does not detect a sufficient level of sparsity.

  The algorithm is also tested against a method assuming a
  ``discrete'' sparsity model as in Compressed Sensing (CS). The conducted test indicates a trade-off
  between speed and accuracy.
  
\end{abstract}

\begin{IEEEkeywords}
Channel Estimation, Sparse, Finite Rate of innovation, Effective
Rank, Krylov Subspace.\end{IEEEkeywords}

\IEEEpeerreviewmaketitle

\section{Introduction}

\IEEEPARstart{C}{ommunications} between two parties are subject to
unknowns: noise and filtering by the Channel Impulse Response (CIR).
With respect to decoding, the noise is treated as nuisance parameters,
and the CIR coefficients as unknowns to be estimated as precisely as possible to
maximize the decoding rate. For this purpose, the channel can be used
to transmit a signal known at both ends --- the \emph{pilots} ---
to gain knowledge about the CIR.

It dictates a trade-off between the portion of the channel reserved to
the pilots --- thus lost to \emph{data} --- and the decoding error
rate due to bad channel estimation, both affecting the communication
bitrate.

The two interdependent aspects of channel estimation are therefore the selection of
pilots and the design of an estimation algorithm. We focus on the
later and use uniform DFT pilot layouts shown in
\figurename~\ref{fig:pilots}, which are found in modern communication standards
using OFDM. 
\begin{figure}[!t]
  \centering {\footnotesize
    \psfrag{f}{\tiny frequency}
    \psfrag{t}{\tiny time}
    \psfrag{a}[cB][cB]{ a) \emph{block} layout}
    \psfrag{b}[cB][cB]{ b) \emph{comb} layout}
    \psfrag{c}[cB][cB]{ c) \emph{scattered} layout}
\includegraphics[width=\columnwidth]{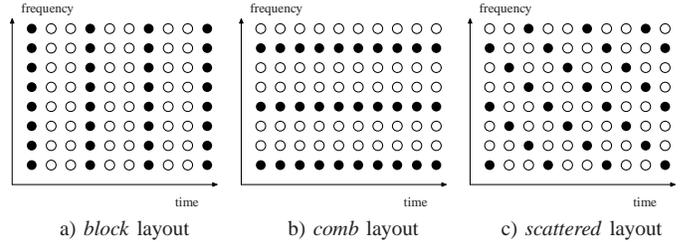}
}
\caption{In pilot assisted OFDM communications, time-frequency slots
  are reserved (black) for pilots, thus providing a sampling of the
  CIR in time and frequency.}
\label{fig:pilots}
\end{figure}
The pilots provide information about the CIR, and so does
an priori knowledge about its structure.

In the noiseless case, the CIR can be perfectly recovered with a
finite set of samples if it
perfectly obeys the a priori known structure, thus providing a \emph{sampling
  theorem} --- e.g. uniform pilots in time at the Nyquist rate
characterize uniquely bandlimited signals.

In this paper we study \emph{Sparse Common Support} (SCS) channels,
i.e. channels sharing a common structure of very
low-dimension. \figurename~\ref{fig:scschannel} shows an example of
SCS channels.

\begin{figure}[!t]
  \centering {\footnotesize 
    \psfrag{P}[c][c][1][-90]{$P$ SCS channels}
    \psfrag{K}[ct][ct]{$K$ signal components}
    \psfrag{c}[ct][ct]{$K$ scatterers}
\includegraphics[width=\columnwidth]{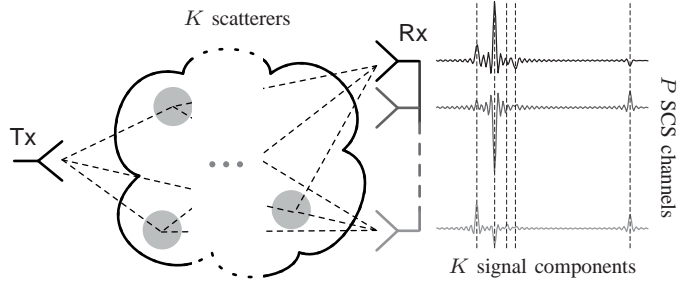}
    }
\caption{The ideal SCS channel model is a set of $P$ channels of
  bandwidth $B$ each
  having $K$ components aligned in time. Assuming complex valued
  signal components, the total number of unknowns is
  $(2P+1)K$ instead of $3PK$ for a sparse model  with independent time
  of arrivals (ToA), or
  $2P$ times the Nyquist Rate for a bandlimited model.}
\label{fig:scschannel}
\end{figure}

\subsection{Problem definition}

Imposing a structure may not lead to a trivial linear
system of equations as it is the case for Shannon/Nyquist (projection
in a linear subspace).
The SCS structure is a union of $K$ unidimensional subspaces
\cite{Lu2008,Barbotin2012} shared by $P$ channels which leads to an
estimation problem exactly and efficiently solvable in the
framework of \emph{Finite Rate of Innovation} sampling
\cite{Blu2008}. In \cite{Barbotin2012} an estimation algorithm SCS-FRI
is proposed and studied.

This leads to two questions. First, as seen in \figurename~\ref{fig:sps}.a), the SCS-FRI algorithm selects the
$K$ subspaces from an infinite and uncountable
set\footnote{notwithstanding the limitations of machine precision.} which could be
sampled. This is the approach taken by \emph{Compressed sensing} (CS) or
the \emph{sparse representation} literature. The
limiting case is to reduce the set to form a basis as shown in
\figurename~\ref{fig:sps}.b). The CIR can be perfectly and uniquely represented with
elements of this set, but it probably won't have a $K$-sparse
representation.
A possible trade-off is to enlarge the set to form a \emph{frame} as in
\figurename~\ref{fig:sps}.c). Reconstruction becomes more complex, but the
shift-invariance provided by the frame \cite{Aldroubi2001} gives a
better $K$-sparse representation.

\begin{figure*}[!t]
  \centering
  {\footnotesize
    \psfrag{A}[cB][cB]{Model \& Acquisition}
    \psfrag{L}[cB][cB]{$\mathrm{sinc}_{1/2\pi T}$}
    \psfrag{T}{$T=1/N$}
    \psfrag{r}{$t_{0}$}    
    \psfrag{q}{$t_{0}\in[0,\ 1[$}
    \psfrag{p}{$c_{0}$}    
    \psfrag{a}{a)}
    \psfrag{b}{b)}
    \psfrag{c}{c)}
    \psfrag{0}{$0$}
    \psfrag{1}{$T$}
    \psfrag{2}{$nT$}
    \psfrag{3}{$T/4$}
    \psfrag{4}{$nT/4$}
    \psfrag{t}{$t$}
    \psfrag{Z}[lB][lB]{$n<N,\in\Zspace$}
    \psfrag{z}[lB][lB]{$n<4N,\in\Zspace$}
    \psfrag{R}[cB][cB]{$t\in [0,\ 1[$}    
    \psfrag{x}[bl][l][1][90]{Synthesis functions}
    \psfrag{y}[bl][l][1][90]{1-sparse representation}
  \includegraphics[width=\textwidth]{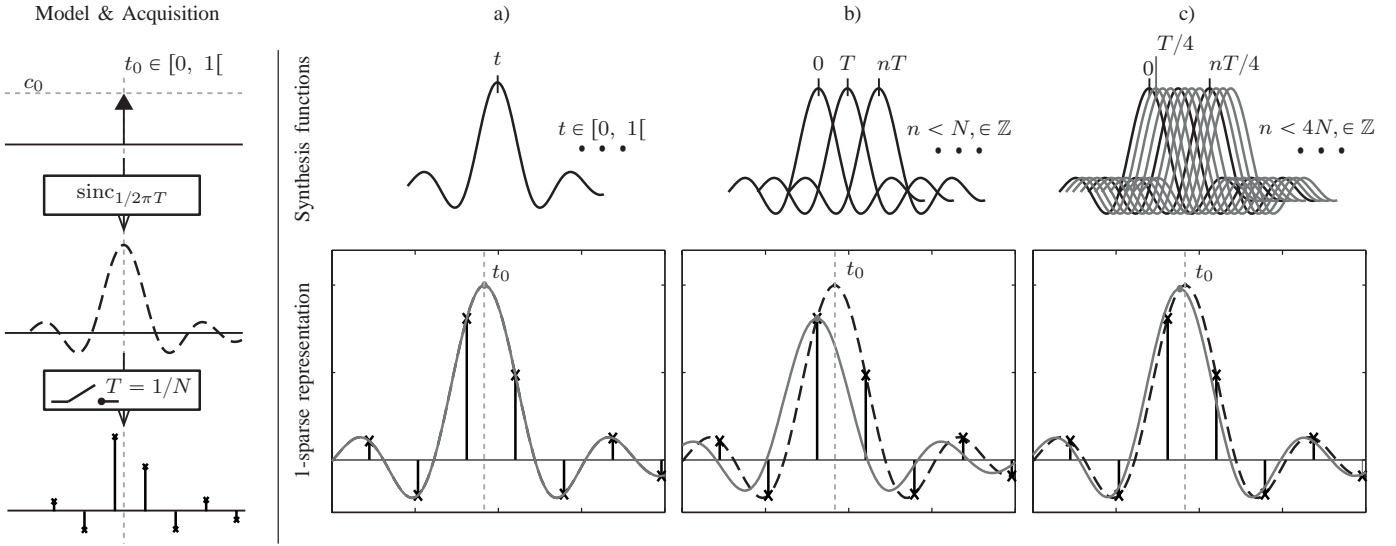}
  }
  \caption{Consider a single pulse ({\tiny$ -\
      -\ -$}, dashed curve)
    critically sampled as shown on the left column. From the samples
    ({\tiny$ -\!\!\!\times$}, stems), 1-sparse representations of the
    original signal are computed.
    In a), the problem is treated as a
    parametric estimation of $t_{0}$ and $c_{0}$ as in the FRI framework. Conceptually, the signal component is
    chosen from the infinite and uncountable set of the pulse shape and
    its shifts in $[0,\ 1[$. The original signal has a perfect
    1-sparse representation in this setup.
    In b), the signal component is chosen in a finite set of functions
    forming a basis of the signal space. The original signal can be
    represented by these functions, but it does not have a 1-sparse
    representation in general. In c), three times more synthesis
    functions are added to the set, to form a frame. The signal has a
    much closer 1-sparse representation in this frame thanks to the
    shift invariance introduced by the redundancy between the
    synthesis functions, but the estimation becomes combinatorially
    more complex. The estimation frameworks b) and c) are referred as
    \emph{discrete sparsity} which is used in \emph{Compressed
      Sensing} (CS), and the estimation is subject to a trade-off between accuracy
    and complexity.}
\label{fig:sps}
\end{figure*}

Second, models are an approximation of reality ---
e.g. \figurename~\ref{fig:scsornot} --- and it is not immediately
clear which amount of modelization error can be tolerated in practice.
This is especially important as we rely on a very specific signal structure,
and this question can only be answered by trials on field
measurements.

\subsection{Contributions}

With these considerations in mind, we adress the estimation of
\emph{Sparse Common Support} (SCS) channels from DFT-domain measurements
(pilots).
SCS channels estimation with FRI was studied in a previous paper \cite{Barbotin2012}, and
we hereby focus on computational issues and robustness.
A fast algorithm is derived based on Krylov subspace projections. Its
main advantage is to have a computation and memory cost proportional
to the sparsity level $K$. It relies on FFT evaluations for the
heavy load computations, which is particularly appealing for embedded DSP
applications.

The sparsity level is unknown in practice, and we derive a heuristic estimate
of it using a new measure  called the Partial Effective Rank
(PER). The PER tracks the ``effective dimension'' of the Krylov subspace
as its size is increased, and can therefore be estimated
online unlike other information criteria such as Akaike, MDL
\cite{Rissanen1978} or
the EDC \cite{Zhao1986}.

To assess the performances of the proposed algorithm, simulations are performed on measured CIRs to which synthetic
AWGN is added.
We compare the obtained results
with an algorithm exploiting discrete sparsity to see if the input
shift sensitivity described in \figurename~\ref{fig:sps} is a
practical issue.

\begin{figure}[!t]
  \centering {\footnotesize 
      \psfrag{T}[Bc][Bc]{\Large Tunnel}
      \psfrag{y}[c][c][1][90]{CIR [sample]}
      \psfrag{P}[Bc][Bc]{\tiny$1\cdots P$}
    \psfrag{x}[ct][ct]{Time [s]}
    \psfrag{c}[l]{\small$\underset {\log_{10}}{\mathrm{Mag.}}$}
        \psfrag{a}[ct][ct]{a) Evolution of CIR over time}
    \psfrag{b}[ct][ct]{b) From ``sparse'' to ``not so sparse'' in $20$s }
    \psfrag{o}[c][c][1][90]{$t=10$s}
    \psfrag{t}[c][c][1][90]{$t=30$s}
    \includegraphics[width=0.9\columnwidth]{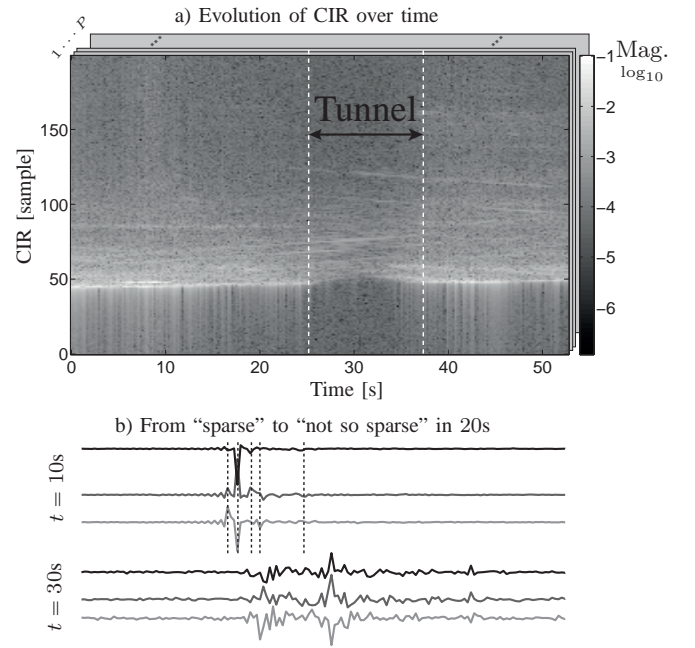}
    }
\caption{Field measurements collected in \cite{Hofstetter2002}. The
  receiver is a base-station with $P=8$ antennas, and the transmitter
  is mobile.
  The image a) shows the magnitude of the first antenna's CIR. The
  channel is qualitatively sparse except when the transmitter
  goes through a tunnel. The real part of the CIR for three different
  antennas is shown in b) confirming the common support property and
  the transient nature of sparsity.}
\label{fig:scsornot}
\end{figure}

\subsection{Outline}

First, we will explicit the SCS channel model for aerial electromagnetic
transmissions, and see under which conditions this model is
relevant. Our conclusion is that the SCS property may not always be
verified, which is confirmed by the data shown in
\figurename~\ref{fig:scsornot}. It establishes the requirements for a robust
SCS estimation algorithm aware of the
operating conditions.

Second, we quickly review the application of FRI and discrete sparse
representations to the SCS
estimation problem. Details are left out since they are already
present in the literature \cite{Berger2010,Barbotin2012,Taubock2008}.

The third part is devoted to the design of a fast and robust algorithm
for the SCS estimation problem, based on Krylov subspace projections
with $\mathcal O(KPN\log(N))$ operations, requiring $\mathcal O(KPN)$
memory, where $K$ is the number of sparse components, $P$ is the
number of channels, and $N$ is the number of collected samples per
channel. For robustness, we introduce the \emph{Partial Effective
  Rank} (PER)
derived from the work of Roy \cite{Roy2007}, which only adds a marginal
$\mathcal O(K^{2})$ cost to
be evaluated \emph{online}. The PER is used to derive a heuristic
estimate of $K$ requiring little overhead. The heuristic may fail if the channel is not
sparse enough, giving a beneficial sense of adequacy to the
algorithm. In such non-sparse cases, the algorithm can yield to a
non-sparse estimation method.

The resulting algorithm --- combining FRI sampling, the PER criterion and
Krylov subspace projection --- named FRI-PERK was implemented in MATLAB and compares favorably to FRI for $N>200$.

We conclude with the application of FRI-PERK and a CS algorithm
(RA-ORMP) to field measurements collected in \cite{Hofstetter2002}
with synthetic AWGN,
and compare them with a simple non-parametric method (spectrum
lowpass interpolation). The results show the SCS property can be
exploited in the medium to low SNR bracket (below $0$dB) to
significantly lower the Symbol Error Rate (SER).

 Compared with the non-parametric method, FRI-PERK  increases the
 proportion of correctly decoded symbols by $1/10$ in average, and
 $1/3$ at best. 

\skel{
\begin{itemize}
\item multiple electromagnetic channels 
\begin{itemize}
\item when are they sparse?
\item when have they common support? 
\item a sparse example from ftw, and a non-sparse one 
\end{itemize}
\item challenges 
\begin{itemize}
\item take advantage of sparsity and common support (solved: cite trans. comm.)
\item performances awareness: estimate sparsity, and based on this
  decide to exploit it or not $\rightarrow$ necessary for field application 
\item computational efficiency: run on low-power hardware, at high rate
\end{itemize}
\end{itemize}
\subsection{outline}
\begin{itemize}
\item intro 
\item the scs estimation problem
\begin{itemize}
\item fri solution: SCS-FRI
\item cs solution: MMV compressed sensing
\end{itemize}
\item An $\mathcal O(KPN\log N)$ FRI algorithm for jointly sparse
  signals: PERK (Partial Essential Rank / Krylov) 
\begin{itemize}
\item projection in Krylov subspace (make connection to control
  theory: very well-known!)
\item fast Krylov iteration 
\item lookahead for sparsity: the Partial Essential Rank
\end{itemize}
\item Numerical results FRI (PERK) vs CS (RemBO) 
\begin{itemize}
\item simulated data
\item field data 
\item PERK speed-up
\end{itemize}
\end{itemize}
}
\section{The physics of sparse common support channels}
\subsection{Sparsity beyond simulations}
Algorithms for the estimation of a sparse signal from noisy
measurements are quite mature \cite{Davies2010,Lee2012,Barbotin2012}. The principal hurdle for
their application to physical signals is the inadequacy between the
simple theoretical model and reality.

For example, indoor electromagnetic channels are in general not
sparse, as described in the Saleh and Valenzuela model
\cite{Saleh1987}. In this model, reflections are bundled in clusters
having an exponentially decaying energy. This dynamic holds in general
\cite{Baum2005,Salz1994}, and after
demodulation, an electromagnetic
channel impulse response $h(t)$ can be described as the superposition of
clustered reflections,
	 \begin{align}
         h(t)\ \  = \ &\  \sum_{k=1}^{K}\sum_{(A_{l},\Delta_{l})\in \mathcal C_{k} }
          c_{l} \varphi(t-t_k - \Delta_{l}),\nonumber\\
          \FTpair &\ 
           \sum_{k=1}^{K}e^{-j\w t_k}\sum_{(A_{l},\Delta_{l})\in
             \mathcal C_{k} }A_l e^{-j\w\Delta_l}\widehat\varphi(e^{j\w}),\label{eq:mpchannel}
	 \end{align}
        such that $\varphi$ is the channel mask in the time domain (after demodulation), and
         $\mathcal C_{k}$ are clusters of reflections with a delay
 $t_{k}$. Within each cluster we observe reflections shifted by
 $\Delta_{l}$ from $t_{k}$ with randomly distributed
 complex-valued\footnote{After demodulation, the equivalent baseband
   channel becomes complex-valued.}
 amplitudes $A_{l}$.

 The number of clusters $K$ is usually small, but the total number of
 reflexions is not. However, if the bandwidth $\Omega_{\varphi}$ of $\varphi$
 and the maximal intra-cluster delays $\Delta_{l}$ are small
 enough, the $0^{\text{th}}$ order approximation
\begin{equation*}
\label{eq:approx0}
e^{-j\w\Delta_l}\widehat\varphi(e^{j\w})\:\approx\:\widehat\varphi(e^{j\w}),
\end{equation*}
holds at all considered frequencies $\w\in ]-\pi,\ \pi]$. If in addition the amplitudes are
distributed identically and independently enough, one of the many
avatars of the central-limit theorem may be used to obtain a
simplified model known as the ``Multipath Rayleigh channel model'':
\begin{equation}
\label{eq:rayleigh}
h(t) \eq \sum_{k=1}^{K}C_{k} \varphi(t-t_k),\quad C_{k}\sim \mathcal
N_{\Cspace}\left(\vct 0,c_{k}^{2}\Id\right).
\end{equation}

The channel is called \emph{sparse} if and only if the expected delay-spread
$\tau$ upper-bounding $t_{K}-t_{1}$ verifies
\begin{equation*}
\label{eq:sparsedef}
K/\tau \:\ll\: \Omega_{\varphi}/2\pi,
\end{equation*}
i.e. if the rate of innovation is substantially
lower than the Nyquist rate.

The requirements for sparsity are thus two-folds:
\begin{itemize}
\item The ``girth'' of each cluster must be a fraction of the
  inverse bandwidth of the channel.
\item The density of clusters must be a fraction of the channel bandwidth
\end{itemize}
The first property hints at channels with a low or medium bandwidth,
while the second one requires long-distances propagation where both
the transmitter and receivers are in a relatively clear environment,
such as outdoor communications.

The high velocity of electromagnetic (EM) waves ensures scatterers of large dimensions, such as trees,
generate clusters of modest time spread allowing the use of
the simplified model (\ref{eq:rayleigh}) instead of
(\ref{eq:mpchannel}) for channels with a bandwidth up to $100MHz$ approximately. \figurename~\ref{fig:sptable} summarizes the
conditions necessary for the channel to be sparse. The channel shown in \figurename~\ref{fig:scsornot} has a
bandwidth of $120$MHz, and an EM wave travels $2.5$m in a time-lapse
equal to the inverse-bandwidth. Therefore, in typical noisy operating conditions, reflections
having a path difference of a fraction of $2.5$m will be unresolvable.

\begin{figure}[!t]
  \centering {\footnotesize 
    \psfrag{a}[Bc][Bc]{(a)}
    \psfrag{b}[Bc][Bc]{(b)}
    \psfrag{c}[Bc][Bc]{(c)}
    \psfrag{d}[Bc][Bc]{(d)}
    \psfrag{h}[c][c][1][90]{High bandwidth $B$}
    \psfrag{l}[c][c][1][90]{Low bandwidth $B/8$}
    \psfrag{s}[cB][cB]{Short delay-spread $\tau/8$}
    \psfrag{L}[cB][cB]{Long delay-spread $\tau$}
    \includegraphics[width=0.9\columnwidth]{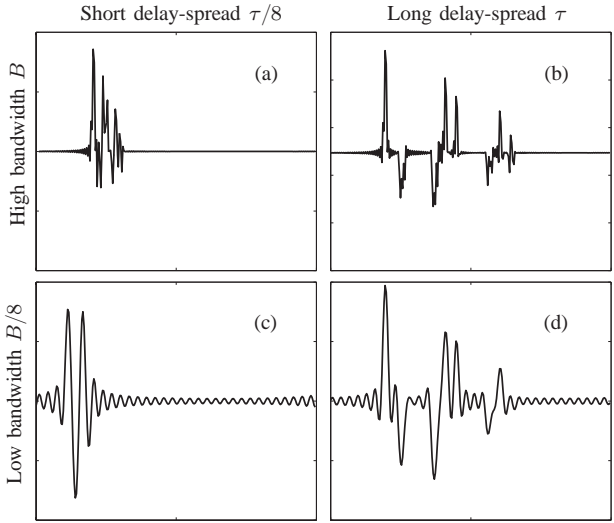}
    }
\caption{This figure shows how sparsity relates to the channel
  bandwidth and its delay-spread. All four panels (a)--(d) have the
  same number of signal components --- $80$ of them grouped in $8$
  clusters with exponentially fast energy decay. Signals (a) and (c) cannot be considered
  sparse as the rate of innovation is close to/greater than the Nyquist rate on the
time-lapse corresponding to the delay-spread, and $0$ outside
it. Signal (b) is weakly sparse, the rate of innovation is for this reason also
high. In this setup the discrete sparsity approach may be
suitable. The signal (d) can be considered sparse as only the $8$ clusters will be
resolvable in the presence of noise. The rate of innovation of this approximation is much
lower than the Nyquist rate. Even though (b) and (d) have the same
rate of innovation in a strict sense, (d) can be approximated with a
signal having $1/10^{\mathrm{th}}$ the rate of innovation of (b)
thanks to its low bandwidth. This approximation motivates the use of a
model with a low rate of innovation in the low-SNR regime where the
model approximation error has less power than the noise.}
\label{fig:sptable}
\end{figure}

On the other hand the ratio $K/\tau$ will be quite large
necessitating long range outdoor transmissions for the channel to be
considered sparse.

\figurename~\ref{fig:sptable} shows the
magnitude impulse
responses of an outdoor channel with a medium bandwidth
from the FTW\footnote{Forschungszentrum Telekommunikation Wien, \texttt{\tiny\href{http://measurements.ftw.at}{http://measurements.ftw.at}}} MIMO
dataset \cite{Hofstetter2002}, and confirms this analysis.

Note that we do not rule-out the occurrence of sparsity in
ultrawide-band communications as the channel dynamics become quite
different at very short range \cite{Cassioli2002}.
\skel{
\begin{itemize}
\item channels are not sparse in general: multiple reflections. show
  indoor source image model,... 
\item reflections happen in clusters, cite: valenzuela, scm model, uwb
  models, ... 
\item however under certain circonstances sparse approximations are
  relevants: late reflexions (long impulse responses), $\frac{1}{\mathrm{bandwidth}} <<
  \frac{\mathrm{reflector girth}}{\mathrm{propagation speed}}$.
\item 
\end{itemize}
}
\subsection{Common support}

We now consider the case where a receiver (Rx) possess several
antennas, as in SIMO and MIMO communications.
Therefore, if we consider a $1$-to-$P$ communication verifying the
multipath model (\ref{eq:rayleigh}), the receiver
observes $P$ channels
\begin{equation*}
\label{eq:onetoP}
h_{p}(t)\eq \sum_{k=1}^{K}C_{k,p} \varphi(t-t_{k,p}),\quad p=1,\dots,P,
\end{equation*}
for a total of $3KP$ unknown coefficients parametrizing the channels.

As for sparsity, the high velocity of EM waves is again a crucial
factor to establish the common support property, and the difference in amplitudes
\begin{align}
\label{eq:ts}
t_{k,1}\approx t_{k,2} \approx \cdots \approx t_{k,P},\\
C_{k,1} \neq C_{k,2} \neq \cdots \neq C_{k,P}.\label{eq:dc}
\end{align}

Indeed, if $d_{\max}$ is the maximal distance between the $P$
receiving antennas, the ToA difference between antennas is
upper-bounded by $2d_{\max}/c$.The criterion for common support (\ref{eq:ts}) is
\begin{equation*}
\label{eq:2}
d_{\max}\:\ll\: \pi\frac{c}{\Omega_{\varphi}}.
\end{equation*}

In \figurename~\ref{fig:scsornot}, $d_{\max}=60\mathrm{cm}$ and
$\pi\frac{c}{\Omega_{\varphi}}=1.25\mathrm{m}$.

To assert (\ref{eq:dc}), we need to quantify the spatial correlation
of paths amplitudes
between the receiving antennas. Using \cite{Barbotin2012} (Proposition
6), and assuming
narrow scatterers, the correlation between antennas separated by a
distance $d_{m,n}$ can be approximated as
\begin{equation*}
\label{eq:corr}
\frac{\Esp{C_{k,m}C_{k^{\prime},m}^{\ast}}}{\sqrt{\Esp{\left|C_{k,m}\right|^2}\Esp{\left|C_{k^{\prime},n}\right|^2}}}
\eq \delta_{k-k^{\prime}}J_{0}\left( d_{m,n}\w_{\mathrm{carrier}}/c \right).
\end{equation*}
Using the numbers of \figurename~\ref{fig:sptable}, correlation between
antennas does not exceed $0.5$.
In most communication scenario (\ref{eq:dc}) is verified by design as
it provides spatial diversity \cite{Rappaport2002}.

In conclusion, we see that the common support property is relatively
easy to establish as it only depends on the antennas topology. This is
not the case for sparsity, in \figurename~\ref{fig:scsornot} when
the transmitter enters a tunnel, a dense train of
reflections is observed. Hence, it is reasonable to say outdoor
communications with medium bandwidth have typically ---as opposed to
``surely''--- sparse channel
impulse responses.

\skel{
\begin{itemize}
\item Follows the same logic as sparsity 
\item restricted bandwidth + high propagation speed = strong
  correlation of ToA through space 
\item idem, through time, if speed of Tx/Rx is small
\item however the strength of reflexions varies much more rapidly. 
\item support correlation through space is more reliable than through
  time: fixed Tx/Rx structure compared to unknown Tx/Rx mobility, terrain,...
\end{itemize}
}
\section{Estimation of sparse common-support channels from DFT pilots}

Assume each frame of period $\tau_{f}$ is uniformly sampled in time
\begin{equation*}
\label{eq:26}
x_{p}[n]\eq x_{p}(n\tau_{f}/N_{f})\ ,\quad n\eq 0,\ \dots,\ N_{f}-1.
\end{equation*}

The pilot layouts in \figurename~\ref{fig:pilots} lead to estimate a set of sparse
common support channels from a uniform subset of its DFT
coefficients. 

To make notation less cumbersome, we assume without loss of generality
that $N_{f}=2MD+1$ , and use negative indices which shall be
understood ``modulo the index limit'' in general. Also the pilot index
is centered on the DC carrier
$$\mathcal P\eq \left\{ -MD,\dots,\ -2D,\ -D,\
0,\ D,\dots, \ MD\right\} $$
and is supposed to take the value 1. The total number of pilots is $N=2M+1$.

Each frame is periodically padded to ensure the
  convolution with the CIR is circular, thus
\begin{align*}
\label{eq:spl}
\widehat{\vct x}_{p}\eq& \mathrm{diag}\left( \vct 1_{\mathcal P}
\right) \widehat{\vct h}_{p},\\
\Leftrightarrow\: \vct x_{p}\eq & \mat W\mathrm{diag}\left( \vct 1_{\mathcal P}
\right)\mat W^{H}\vct h_{p},
\end{align*}

 such that $\mat W$ is the DFT matrix, $\vct 1_{\mathcal P}$ is the
 indicator of $\mathcal P$, and
\begin{equation}
 \label{eq:spsdft}
 \widehat{h}_{p}[n]\eq \sum_{k=1}^KC_{k,p}W_{N_{f}}^{nt_{k}/\tau_{s}}.
 \end{equation}
 is the sampled SCS channel model in the DFT domain.

The operator $ \mat W\mathrm{diag}\left( \vct 1_{\mathcal P}
\right)\mat W^{H}$ is the orthogonal projection in the subspace spanned
by the basis vectors of the DFT corresponding to the pilot
indices.%

With a good synchronization, the ToAs are contained in $]-\tau/2,\
\tau/2]$, and if
\begin{equation}
\label{eq:4}
D\:\leq\: \left\lfloor \frac{\tau_{s}}{\tau} \right\rfloor\ ,\quad D\in\Nspace,
\end{equation}

the original spectrum $\widehat{\vct h}_{p}$ can be recovered by
ideal lowpass interpolation

\begin{equation}
\label{eq:6}
\mat W^{H}\vct h_{p}\eq \mat C_D \mathrm{diag}\left( \vct 1_{\mathcal P}
\right)\mat
W^H \vct x_p\ ,
\end{equation}
such that $\mat C_{D}$ is circulant with entries
\begin{equation*}
\label{eq:7}
\left[\mat C_{D}\right]_{m,n}\eq \frac{\sin(\pi(m-n)/D)}{\sin(\pi(m-n)/N_{f})}\ .
\end{equation*}
This principle is at the core of OFDM communications which interleave
data and pilots in the DFT domain.

If the measurements $\vct x_p$ are corrupted by AWGN, this technique projects orthogonally the
measurements in the signal subspace and therefore minimizes the
estimation MSE if common-support property or sparsity are ignored.

\subsection{FRI approach}\label{sec:fri}

The channels in (\ref{eq:spsdft}) do not lie in linear subspaces, but
in a common union of subspace. Therefore the channels can be estimated
in two-steps

\begin{enumerate}
\item Identify jointly the $K$ subspaces (the common support). 
\item Compute the orthogonal projection of the measurements
  in the union of subspaces separately for each channel. 
\end{enumerate}
Algorithms and analysis for this problem are found in
\cite{Barbotin2012}, and are a simple extension of well-known
linear array-processing techniques such as ESPRIT \cite{Roy1989} or the
annihilating filter \cite{Tufts1982} to common-support channels.

The union of subspaces is identified by studying the column-space of
the following data matrices
\begin{align*}
\label{eq:8}
\mat
T_{p}\eq&
\begin{bmatrix}
\widehat{h}_p[0]&\widehat{h}_p[-D]&\widehat{h}_p[-2D]&\cdots\\
\widehat{h}_p[D]&\widehat{h}_p[0]&\widehat{h}_p[-D]&\cdots\\
\widehat{h}_p[2D]&\widehat{h}_p[D]&\widehat{h}_p[0]&\cdots\\
\vdots&\vdots&\vdots&\ddots
\end{bmatrix},
\end{align*}

of dimensions $(M+1)\times (M+1)$

These data matrices have a Vandermonde decomposition
\begin{align*}
\mat T_p \eq \mat J \mathcal V \mathrm{diag} \left( C_{p,1},\ \dots,\ C_{p,K}
\right)\mathcal V^{H},\\
\text{such that } \mathcal V\eq
\begin{bmatrix}
  1 & \cdots & 1\\
  W^{Dt_1/\tau_s} & \cdots & W^{Dt_K/\tau_s}\\
W^{2Dt_1/\tau_s} & \cdots & W^{2Dt_K/\tau_s}\\
\vdots&\vdots&\vdots
\end{bmatrix}\ ,
\end{align*}
is an $(M+1)\times K$ Vandermonde matrix and $\mat J$ is the exchange matrix.
This decomposition has the merit to clearly show $\mat T_{p}$ has a
column-space of dimension $K$ which depends only on the support, and
is thus the same for each channel.
Moreover any basis for this column-space verifies a
\emph{rotation invariance} property. Indeed, let
$$\mathcal
V^{\uparrow}\eqdef\left[\mathcal V\right]_{1:(M-1),:}\quad\text{and}\quad\mathcal
V^{\downarrow}\eqdef\left[\mathcal V\right]_{2:M,:}.$$
Then
\begin{equation*}
\label{eq:10}
\mathcal V^{\uparrow}\eq \mathcal V^{\downarrow}\mat \Psi\ ,\quad \mat
\Psi=\mathrm{diag}\left( W^{Dt_1/\tau_s},\ \dots,\ W^{Dt_K/\tau_s} \right).
\end{equation*}
Any basis $\mat V$ having the same span as $\mathcal V$ can be written as
$\mat V\eqdef\mathcal V\mat A$, where $\mat A$ is a fullrank $K\times
K$ matrix, therefore
\begin{align*}
\label{eq:11}
\mat V^{\uparrow}\eq& \mathcal V^{\uparrow}\mat A,\\
\eq&\mathcal V^{\downarrow}\mat\Psi\mat A,\\
\eq& \underbrace{\mathcal V^{\downarrow}\mat A}_{\mat
  V^{\downarrow}}\underbrace{\mat A^{-1}\mat\Psi\mat A}_{\mat X}.
\end{align*}
which means the support is recovered from any basis $\mat V$ of the
column-space of $\mat T_{p}$ as the phase of the eigenvalues of the solution to
\begin{equation*}
\label{eq:12}
\mat V^{\uparrow}\eq \mat V^{\downarrow}\mat X\ ,
\end{equation*}
which is the ESPRIT algorithm \cite{Roy1989}.
In the presence of AWGN, the basis $\mat V$ may be obtained in a robust manner from the SVD of the
stacked matrix
\begin{equation}
\label{eq:btoep}
\mat T=\begin{bmatrix}\mat T_{1}\\ \vdots\\ \mat T_{P}\end{bmatrix},
\end{equation}
but other subspace identification techniques may be used depending on
the measurements model.

To compare with the canonical estimation technique (\ref{eq:6}), this
method not only uses the limited-length of the delay-spread but also
sparsity and common-support. The solution of the estimation problem is
to be found in a smaller set of candidates.

We expect that with good
algorithms the estimation will be more resilient to noise. On the
other hand more restrictive models yield higher modelization
error. %

Note that sparsity is not used to reduce the number of pilots but to
make the estimation more robust to noise. See \cite{Barbotin2012} for
pilot reduction.

\skel{
\begin{itemize}
\item solved in (cite trans. comm. paper), not detailed here
\item restriction of SCS channels to $(P+1)K$ DoF.
\item recover support using well-known rotation invariance property of
  the signal subspace. 
\end{itemize}
}
\subsection{CS approach}
The measured DFT samples $\widehat{\vct x}_{p}$ corresponding to the
set of pilot subcarriers
$\mathcal P$ are
linked to the channel impulse response coefficients by
\begin{align}
\label{eq:cs}
&\mat X\eq \left[ \mat W \right]_{\mathcal P,\mathcal C}
\mat H,\\
&\mat X\eqdef \left[ [\widehat{\vct x}_{1}]_{\mathcal P},\ \dots,\   [\widehat{\vct
    x}_{P}]_{\mathcal P}\right]\ ,\ \ \mat H\eqdef \left[ {\vct h}_{1},\ \dots,\   {\vct
    h}_{P}\right],\nonumber
\end{align}
where $\mathcal C$ is the index set on which the channel impulse
responses are supported a priori. $\mathcal C$ contains contiguous
indices covering a time-lapse less than or equal to the delay-spread.
The matrix of channel coefficients $\mat H$ is assumed to be
\emph{jointly row-sparse}, i.e. only a few rows of $\mat H$ are not
null.

This is known as the MMV (Multiple Measurement Vectors) problem which is not a trivial extension of
the SMV (Single Measurement Vector) problem. It has received a lot of attention in the past few
years and good, efficient algorithms exist such as RA-ORMP
\cite{Davies2010}, Lee \& Bressler \cite{Lee2012},\dots

First notice that the channel model described in (\ref{eq:spsdft}) is
not jointly row-sparse nor sparse, unless the ToAs coincide with the
sampling points.

Second, since (\ref{eq:4}) holds, the system (\ref{eq:cs}) is invertible, which is typically
not the case in compressed sensing. In this study sparsity is used as a
regularization technique to gain robustness to noise as in Section~\ref{sec:fri}.
One could select a random subset of pilots to effectively compress the
channel measurements leaving additional space for data.

The compressed sensing literature on sparse channel estimation has
followed two tracks
\begin{enumerate}
\item\label{item:3} The ToA coincide with the sampling grid.
  \item The ToA do not coincide with a regular grid resulting in ``approximately sparse'' channels
    \cite{Taubock2008,Berger2010,Soltanolkotabi2009}. It introduces
    a model mismatch which can be mitigated using frames (increased estimation complexity).
  \end{enumerate}

  The first assumption is remote from reality and the reported
  performances shall be taken with a grain of salt, nevertheless it is
  useful to benchmark algorithms. The second one is
  more relevant and its limitations due to
  model mismatch may fade away when applied to real-world data where
  it is unavoidable anyway.

\skel{
\begin{itemize}
\item known as the MMV problem. 
\item exploits sparsity by proxy, subject to contradiction: sparse in
  time $\Rightarrow$ high bandwidth (sharp localized pulses) $\Rightarrow$ limited
  clustering. low bandwidth $\Rightarrow$ ``fatter'' pulses.
\item Many works available: mixed-norms, rembo, ... 
\item will use rembo for comparison (with a slight modification
  yielding the best results) 
\item what matters most is not the algorithm (many very good of them),
  but the adequacy of the model. 
\end{itemize}
}

\section{Application of the FRI approach: PERK}

In this section, we consider the number of signal components $K$ and
the number of antennas $P$ to be small compared to the number of
pilots $N$.

The FRI based channel estimation technique outlined in
Section~\ref{sec:fri} has two shortcomings if implemented in a
straightforward manner
\begin{enumerate}
\item\label{item:cpl} Its computational complexity and memory footprint
  are respectively $\mathcal O(N^{3})$ and $\mathcal O(N^{2)}$. Both
  are contributed by the SVD decomposition used to estimate the
  column-space of $\mat T$.
\item\label{item:spr} The sparsity level $K$ is unknown.
\end{enumerate}

The complexity \ref{item:cpl}) is especially important for channel
estimation as it is a core signal processing block at the receiver's
physical layer. It is called on several times per second, and
must operate in real-time with limited power and hardware resources.
\skel{
\begin{itemize}
\item more than implementation details 
\item channel estimation is a core functional block in a communication
  scheme: must be numerically very efficient 
\item channel sparsity cannot be taken for granted: must be robust,
  and fall-back to a classical non-sparse estimation when needed
\end{itemize}}
\subsection{An $\mathcal{O}\left( KPN\log N \right)$ FRI estimation}
Naive computation of an SVD of $\mat T$ defined in \ref{eq:btoep} to obtain a $K$-dimensional subspace of
the column-space is wasteful for two reasons. First, only $K$ out of $M+1$
principal singular pairs $(\sigma_{m},\ \vct v_{m}$ are of interest. Second, $\mat T$ is well
structured --- made of Toeplitz blocks --- and
matrix factorization techniques such as QR will destroy this structure during the
factorization process, rendering it unexploitable and
requiring an explicit storage of the data matrix.

Since we are interested in the column-space of $\mat T$, we will work
on the hermitian symmetric \emph{correlation matrix}
\begin{equation*}
\label{eq:3}
\mat T^{H}\mat T\eq =\sum_{p=1}^{P} \mat T_{p}^{H}\mat T_{p},
\end{equation*}

which has eigenpairs $(\sigma_{m},\ \vct v_{m}$, $m=0,\dots,M$.

A solution to compute only the leading eigenpairs, is to project the correlation matrix in a
\emph{Krylov subspace} \cite{Parlett1998}. This is an iterative method in which
computations are performed on the original matrix, meaning the
original structure is preserved. Hence the memory footprint is
kept low and the computational complexity is similar for each
iteration.

Krylov approximants have received a lot of attention in the numerical
analysis \cite{Parlett1998,Golub1996,Saad1980,Paige1972} or control
theory \cite{Boley1997,Boley1997a} literature,
therefore we will only quickly skim through the subject.

Projection into a \emph{Krylov subspace} is done with \emph{Lanczos
  algorithm}. A comprehensive analysis was done by Xu \cite{Xu1994a}
for the estimation of covariance matrices in linear array
processing. He proposed an $\mathcal O(N^{2})$ algorithm\footnote{$K$
  is considered small and constant compared to $N$ and $P=1$.} together with
an approximate $\mathcal O(N^{2})$ estimation of the subspace
dimension $K$. Implementation of the Lanczos algorithm is quite
involved in practice and may require costly corrections at each
iteration \cite{Paige1972,Parlett1998,Golub1996}, this is why asymptotically more expensive
$\mathcal O(N^{3})$ are generally preferred unless the system is very large.

The additional structure on the original data matrix $\mat T$ allows us
to lower the complexity  from $\mathcal O(N^{2})$ to $\mathcal O(N\log
N)$, making it appealing even for matrices of modest size, and we will derive a novel criterion to estimate the signal
subspace dimension $K$ which requires $\mathcal O(K^{2})$ computations to
be run along the subspace estimation process.

A $K$-dimensional Krylov subspace $\mathcal K$ of a $M$-dimensional hermitian matrix $\mat A$ has is
\begin{equation*}
\label{eq:kbasis}
\mathcal{K}_{K,\vct f}(\mat A)\eq \mathrm{span}_{k=1,\dots,K} \mat A^{k}\vct f ,
\end{equation*}
where $\vct f$ is an initial vector which can be randomly chosen. Note that all the
properties of Krylov subspaces only hold in probability, an
``unlucky'' initial draw may compromise them.

\begin{table*}[t]\centering  \ra{1.3}\caption{Approximative ``$\mathcal
    O$''  complexity
    for subspace identification.}
 \footnotesize{\begin{tabular}{@{}lrrrr@{}}\ \\\toprule
\textbf{Algorithm} & \textbf{Main computation}& \textbf{Storage}
&\textbf{Latency}&\textbf{Processing units (pu)} \\\midrule
 Krylov + PER& $KPN\log N$ & $K(N+1)$ &
 $KN$ \cite{Yeh2003} & $P\:\times$ FFT engines ($N+1$ points)\\
Full SVD (serial)  &$PN^{3}$& $PN^{2}$ & $PN^{3}$& 1
multipurpose pu.\\%
Full SVD (systolic array) \cite{Brent1985,Cavallaro1988}  &$
PN^{3}$& $PN^{2}$ & $N(\log N +P)$ & $N^{2}\times\:$ 2-by-2  SVD pu.\\%
\bottomrule
\end{tabular}\vspace{+2mm}\\
\parbox{0.75\textwidth}{The full SVD is done with Jacobi
    rotations and can be massively parallelized using the systolic
    array method of Brent, Luk \& Van Loan
    \cite{Brent1985}. Parallelism greatly reduces the latency of the %
    system, but since it does not reduces the number of computations
    it comes at the cost of using multiple processing units.}
} \label{tabl:cpl}\vspace{-4mm}
\end{table*}

The $k^{\mathrm{th}}$ basis vector $\mat A^{k}\vct f$ is a monomial of $\mat
A$ of  degree $k$, therefore using a three terms linear recursion on
this sequence of monomials one
can derive a sequence of orthogonal polynomials \cite{Szego1939}. This is equivalent to
say that an orthonormal basis $\mat Q_{K}$ of $\mathcal{K}_{K,\vct f}(\mat A)$ is computed by
orthogonalization of each  of $\mat A^{k}\vct f$ only against the two previous
ones and normalization. So, the main cost of the procedure is the
computation of the non-orthogonal basis vectors, which is done by
recursive  matrix-vector multiplications.

The three terms recursion used to orthogonalize the Krylov basis
implies that $\mathcal{K}_{K,\vct f}(\mat A)$ has a
tridiagonal decomposition

\begin{equation*}
\label{eq:16}
\mathcal{P}_{\vct f,K} \mat A = \mat Q_{K}^{H} \mat\Gamma_{K}\mat Q_{K}.
\end{equation*}
where $\mat Q_{K}$ is unitary and $ \mat \Gamma_{K}$ is symmetric and
tridiagonal (thanks to the 3-terms recursion). The eigenpairs
of $\mathcal{P}_{\vct f,K} \mat A$ are derived from this factorization
at little cost \cite{Parlett1998,Golub1996}, and they are called the
\emph{Ritz pairs}.

A remarkable property is that the Ritz pairs quickly converge to the principal eigenpairs of $\mat A$ as
$K$ grows. This quick convergence does not follow exactly the result of
Xu \cite{Xu1994a} --- since in our case the Toeplitz blocks are of
approximately square size --- but it can be explained
with a theorem of Saad \cite{Parlett1998}~(Theorem 12.4.1) which links
the rate of convergence of the Ritz pairs to the growth rate of
Chebyshev polynomials. The theorem shows that Ritz pairs converge faster to the
corresponding eigenpairs if the eigenvalues are farther apart.

Because of the Toeplitz structure of the data matrix, matrix-vector
multiplications with $\mat T^{H}\mat T$, which is the central step of
a Lanczos iteration has a cost of
$\mathcal{O}(PN\log(N))$. Indeed, 
\begin{equation*}
\label{eq:5}
\mat T^{H}\mat T\vct f\eq \sum_{p=1}^{P} \mat T_{p}^{H}\mat T_{p}\vct
f
\end{equation*}
is the sum of $P$ matrix-vector multiplications, each of
them realized as two consecutive Toeplitz maxtrix-vector
multiplications.
Square Toeplitz matrices of dimension $M+1$ can be embedded in
circulant matrices of dimension $2(M+1)=N+1$
\begin{align*}
\mat C_{p}\eqdef & \begin{bmatrix}\mat
  T_{p}&\mat{\widebar{T}}_{p}\\\mat{\widebar{T}}_{p} & \mat
  T_{p}\end{bmatrix},\\
\mat T_{p}\eq &\mathrm{toeplitz}(t_{p,-M},\dots,t_{p,0},\dots, t_{p,M}),\\
\mat T_{p}\eq &\mathrm{toeplitz}(t_{p,1},\dots,t_{p,M},0,t_{p,-M},\dots,t_{p,-1}).
\end{align*}

Circulant matrices are diagonalized by the DFT matrix, hence the
cost of a circulant matrix-vector multiplication is dominated by the
cost of $4$ FFT. 

Since
\begin{equation*}
\label{eq:15}
\begin{bmatrix}\mathbb I_{M} & \mathbb O_{M}\end{bmatrix}\mat C_{p}^{H}\begin{bmatrix}\mathbb I_{M} & \mathbb O_{M}\\\mathbb O_{M} &
  \mathbb O_{M}\end{bmatrix}\mat C_{p}^{H}\begin{bmatrix}\vct f\\ \vct
  0\end{bmatrix}\eq \mat T^{H}\mat T\vct f,
\end{equation*}

each Lanczos iteration has a cost dominated by $4P$ FFT\footnote{The
  DFT of the circulant generators can be precomputed.} of length $N+1$.
Since only the first half of the input and the first half of the output of
each FFT are non 0 or not needed, each FFT could be replaced by two FFT
of half the size.

The resulting algorithm fulfill all the requisites for a fast embedded
implementation since most of the computational load is put on the FFT
block, ubiquitous in communication systems. Table~\ref{tabl:cpl}

\skel{
\begin{itemize}
\item signal subspace identification with Krylov projection: lanczos
  algorithm 
\item well-known in control theory, but not widely used 
\item loss of orthogonality in the lanczos process and its mitigation
  makes it uncompetitive compared to more modern methods (e.g. QR
  factorization, ...) unless $N/K$ is very large, and $N$ is large. 
\item however, very competitive in our case since the data matrix has
  a toeplitz structure: each iteration preserves the structure! (in
  contrast to factorization methods) 
\item low $\mathcal O \left( KN \right)$ memory footprint 
\item for each channel, each iteration has a fast  $\mathcal O \left(
    N\log N \right)$ FFT based implementation: uses standard hardware
  functions (fft) to achieve speed-up.
\end{itemize}}

\subsection{On-line sparsity assessment}

In this section we introduce the \emph{Partial Effective Rank} (PER) a criterion to estimate the signal
subspace dimension working online with the Lanczos algorithm. Its main
advantage compared to other methods is to require little information about
the noise space. A formal study of the PER is deferred to an upcoming report.

\subsubsection{Shortcomings of traditional information criterions}
Information theoretic criteria such as Rissanen's MDL \cite{Rissanen1978}, Akaike
criterion or the EDC \cite{Yin1987} are powerful mathematical tools
which may be used to
evaluate the sparsity level $K$. They all follow a similar pattern,
which is to minimize
$$\mathsf{ITC}(\vct \sigma,K)\eq \mathcal
L(\vct\sigma,K)+K\cdot(2(M+1)-K)\cdot \mathsf P(M),$$
where $\mathcal L$ is the log-likelihood function based on $\vct
\sigma$ the singular values
of $\mat T$, and $K$ an estimate of the sparsity level. The
term $\mathsf P$ is a penalty growing at rate between $\mathcal O(1)$
and $o(N)$ depending on which one of the criteria is used.

Their performances are well understood \cite{Kaveh1987}, but the
evaluation of the likelihood function requires to compute the
product

$$\prod_{m=K}^{M}\sigma_{m}^{2}\eq \mathrm{det}(\mat T^{H}\mat
T)\left/\prod_{k=0}^{K-1}\sigma_{k}^{2} \right., $$

which has an algorithmic cost superior to the Lanczos algorithm itself.

Also, the argument used in \cite{Xu1994} to show consistency of a
partial evaluation of $\mathsf{ITC}(\vct \sigma,K)$ cannot be used in
our setup because the asymptotic distribution of the noise matrix
spectrum\footnote{``spectrum'' shall be understood as the SVD spectrum
  of the matrix
and not the noise Fourier power spectrum when matrices are concerned.} is
extremely different.
As the number of pilots
increases, the probability measure of the noise spectrum concentrates \cite{Bryc2006},
but its support is unbounded. This is a major difference
with the setup considered in \cite{Xu1994,Xu1994a} where the noise power spectrum
concentrates around the noise variance . Formally we can say
\begin{proposition}
Let $\mat T$ be an $(M+1)P\times (M+1)$ matrix composed of fixed and finite
number $P$ of stacked
Toeplitz blocks $\mat T_{p}$ of square dimensions $(M+1)\times (M+1)$ as in (\ref{eq:btoep}).
Assuming the generators of each block are sequences of white Gaussian
noise

\begin{equation*}
\label{eq:18}
\sigma_{\max}(\mat T^{H}\mat T/(M+1))\eq \mathcal O(\log M),
\end{equation*}
i.e. even with proper normalization, the spectral norm of an all-noise
matrix diverges.
\end{proposition}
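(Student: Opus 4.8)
The plan is to bound the operator norm $\sigma_{\max}(\mat{T}^{H}\mat{T}) = \norm{\mat{T}}^{2}$ through the circulant embedding already introduced for the fast Lanczos iteration, thereby reducing everything to the size of the largest Fourier coefficient of a white Gaussian sequence. First, since $P$ is fixed and $\mat{T}^{H}\mat{T} = \sum_{p=1}^{P}\mat{T}_{p}^{H}\mat{T}_{p}$ is a sum of $P$ positive semi-definite matrices, subadditivity of the spectral norm gives
$$\sigma_{\max}(\mat{T}^{H}\mat{T}) \;\leq\; \sum_{p=1}^{P}\sigma_{\max}(\mat{T}_{p}^{H}\mat{T}_{p}) \;=\; \sum_{p=1}^{P}\norm{\mat{T}_{p}}^{2}.$$
Because $P$ does not grow with $M$, it is enough to establish $\norm{\mat{T}_{p}}^{2} = \mathcal{O}((M+1)\log M)$ for a single block; the factor $(M+1)$ is then absorbed by the normalization.

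Second, I would exploit the circulant embedding $\mat{C}_{p}$ of dimension $N+1 = 2(M+1)$ defined above, in which $\mat{T}_{p}$ sits as a principal submatrix. Since the spectral norm of a submatrix never exceeds that of the full matrix, $\norm{\mat{T}_{p}} \leq \norm{\mat{C}_{p}}$. As circulant matrices are diagonalized by the DFT, $\norm{\mat{C}_{p}} = \max_{k} \abs{\hat{c}_{p}[k]}$, where $\hat{c}_{p}[k]$ is the DFT of the circulant's generating sequence. That generating sequence consists (up to one zero pad) exactly of the i.i.d. Gaussian entries $\{t_{p,j}\}_{j=-M}^{M}$, so each $\hat{c}_{p}[k]$ is a complex Gaussian obtained by summing $\Theta(M)$ independent Gaussians against unit-modulus twiddle factors, and hence has variance $\Theta(M)$.

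Third, the problem becomes a maximal inequality for $N+1$ Gaussian variables of variance $\Theta(M)$. Using a union bound over the Gaussian tail — which requires only the marginals of the $\hat{c}_{p}[k]$, not their independence — one gets $\Esp{\max_{k}\abs{\hat{c}_{p}[k]}^{2}} = \mathcal{O}(M\log(N+1)) = \mathcal{O}(M\log M)$, and the same bound holds with high probability. Combining, $\norm{\mat{T}_{p}}^{2} \leq \norm{\mat{C}_{p}}^{2} = \mathcal{O}(M\log M)$, and dividing the block sum by $(M+1)$ yields $\sigma_{\max}(\mat{T}^{H}\mat{T}/(M+1)) = \mathcal{O}(\log M)$, as claimed.

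I expect two points to require care. The first is bookkeeping that is easy to get wrong: stating in which sense the $\mathcal{O}(\log M)$ holds (in expectation, in probability, or almost surely along the sequence) and tracking the Gaussian variance through the embedding so that the $(M+1)$ normalization cancels cleanly. The second, genuinely harder point is that the submatrix inequality only delivers the stated \emph{upper} bound; to justify the accompanying claim that the normalized norm actually \emph{diverges}, a matching lower bound of order $\log M$ is needed. This cannot come from the circulant, since a submatrix norm may be strictly smaller, so I would instead bound $\norm{\mat{T}_{p}}$ from below through the random symbol $f_{p}(\theta) = \sum_{j=-M}^{M} t_{p,j}e^{\mathrm{i}j\theta}$ and an extreme-value argument for random trigonometric polynomials showing its supremum is $\Omega(\sqrt{M\log M})$ — the delicate part being to transfer the off-grid supremum of $f_{p}$ into an actual quadratic form of the finite section $\mat{T}_{p}$.
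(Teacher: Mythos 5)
Your argument is correct in substance but takes a genuinely different route from the paper's for the key single-block estimate. The paper proves the upper bound exactly as you do at the top level (triangle inequality over the $P$ positive semi-definite blocks) and obtains the divergence by noting $\norm{\mat T_{1}}\leq\norm{\mat T}$ via submultiplicativity; but for the crucial fact that a single random Toeplitz block satisfies $\norm{\mat T_{p}}\asymp\sqrt{M\log M}$ it simply cites Meckes' theorem on random Toeplitz matrices, in both directions. You instead re-derive the upper half of that estimate from scratch: circulant embedding, $\norm{\mat T_{p}}\leq\norm{\mat C_{p}}=\max_{k}\abs{\hat c_{p}[k]}$, each $\hat c_{p}[k]$ Gaussian of variance $\Theta(M)$, and a union bound needing no independence among the $\hat c_{p}[k]$. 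This is a clean, self-contained, and elementary proof of the stated $\mathcal O(\log M)$ bound --- it is essentially the ``easy half'' of Meckes' result, and it has the pedagogical advantage of reusing the circulant embedding the paper already introduces for the fast Lanczos iteration. What it does not give you, as you correctly diagnose, is the matching lower bound that justifies the informal ``diverges'' remark: the submatrix inequality goes the wrong way, and transferring the $\Omega(\sqrt{M\log M})$ supremum of the random symbol (a Salem--Zygmund-type extreme-value statement) down to a quadratic form of the finite Toeplitz section is precisely the hard half of Meckes' theorem. So where the paper gets both directions from one citation, your route proves the upper bound elementarily and would still, in practice, need to invoke Meckes (or reproduce his argument) for the divergence claim; that division of labor is worth stating explicitly, along with the mode of convergence (in expectation or with high probability) in which your maximal inequality is meant.
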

\begin{proof}
The matrix norm induced by the euclidian vector norm on $\Cspace^{(M+1)P}$ is
submultiplicative, therefore
\begin{equation}
\label{eq:19}
\|\mat T_{1}\|\leq\left\|\begin{bmatrix}\mathbb I_{M+1} & \mathbb
    O\\\mathbb O & \mathbb O\end{bmatrix}\begin{bmatrix}\mat T &
    \mathbb O\end{bmatrix}\right\|\leq 1\cdot\|\mat T\|.
\end{equation}

We then use a result of Meckes \cite{Meckes2007} on the spectral norm
of square Toeplitz matrices with independently distributed subgaussian
generating coefficients 
\begin{equation}
\label{eq:20}
\|\mat T_{1}\|\sim \mathcal O\left(\sqrt{M\log M}\right).
\end{equation}

Using the triangle inequality and Meckes\'{} result
\begin{equation*}
\label{eq:21}
\|\mat T^{H}\mat T\|\leq \sum_{p}\|T_{p}^{H}T_{p}\|\sim P\mathcal
O(M\log M),
\end{equation*}
which together with (\ref{eq:19}) and (\ref{eq:20}) proves the proposition.

\end{proof}

It indicates that for pure AWGN measurements the spectrum of the matrix is
unbounded, and one cannot expect the signal and the noise spectrum to
separate nicely. Without a proper approximation, the  computation of
$\prod_{m=K+1}^{M}\sigma_{m}^{2}$ would have a cost of
$\mathcal{O}(PN^{2}\log(N))$ in general\footnote{Using a
  block-Levinson recursion to estimate the determinant $\mat T$, it maybe brought down to
  $\mathcal{O}((PN)^2)$, but it was not verified.}, driving the
complexity of the algorithm.

\subsubsection{The Partial Effective Rank}

The effective rank, is a matrix functional introduced by Roy
\cite{Roy2007} which may be seen as a ``convexification'' of the rank.
\begin{definition}\label{def:erank}
  Let $\mat A$ be a matrix with singular values
  $\vct\sigma=[\sigma_1,\ \dots,\ \sigma_{M}]^{T}$ in decreasing order, and
  singular values distribution
\begin{equation*}
\label{eq:17}
p_{m}\eq\sigma_{m}/\|\vct\sigma\|_{1}\ ,\quad m\eq 1,\ \dots,\ M.
\end{equation*}
  
The \emph{Effective Rank} of $\mat A$ is
\begin{equation*}
\label{eq:22}
\mathrm{erank}(\mat A)\eq e^{\mathcal H(p_{1},\dots,p_{M})},
\end{equation*}
where $\mathcal H$ is the entropy of the singular values distribution
\begin{equation*}
\label{eq:13}
\mathcal H(p_{1},\dots,p_{M})\eq -\sum_{m=1}^{M} p_{m}\log_{e}p_{m}.
\end{equation*}
\end{definition}

For elementary properties of the effective rank, we defer to
\cite{Roy2007}.

\begin{figure}[!t]
  \centering{\footnotesize
a)\\
{
  \psfrag{a}[rB][rB]{$9$dB}
  \psfrag{b}[rB][rB]{$4$dB}
    \psfrag{c}[rB][rB]{$-1$dB}
    \psfrag{d}[rB][rB]{$-6$dB}
    \psfrag{e}[rB][rB]{$-11$dB}
    
    \psfrag{k}[lB][lB]{true sparsity level}
    \psfrag{K}[ct][ct]{$K$}
    \psfrag{s}[rB][rB]{SNR:}
    \psfrag{P}[ct][ct][1][-90]{$\mathrm{PER}_{K}$}
    \includegraphics[width=0.85\columnwidth]{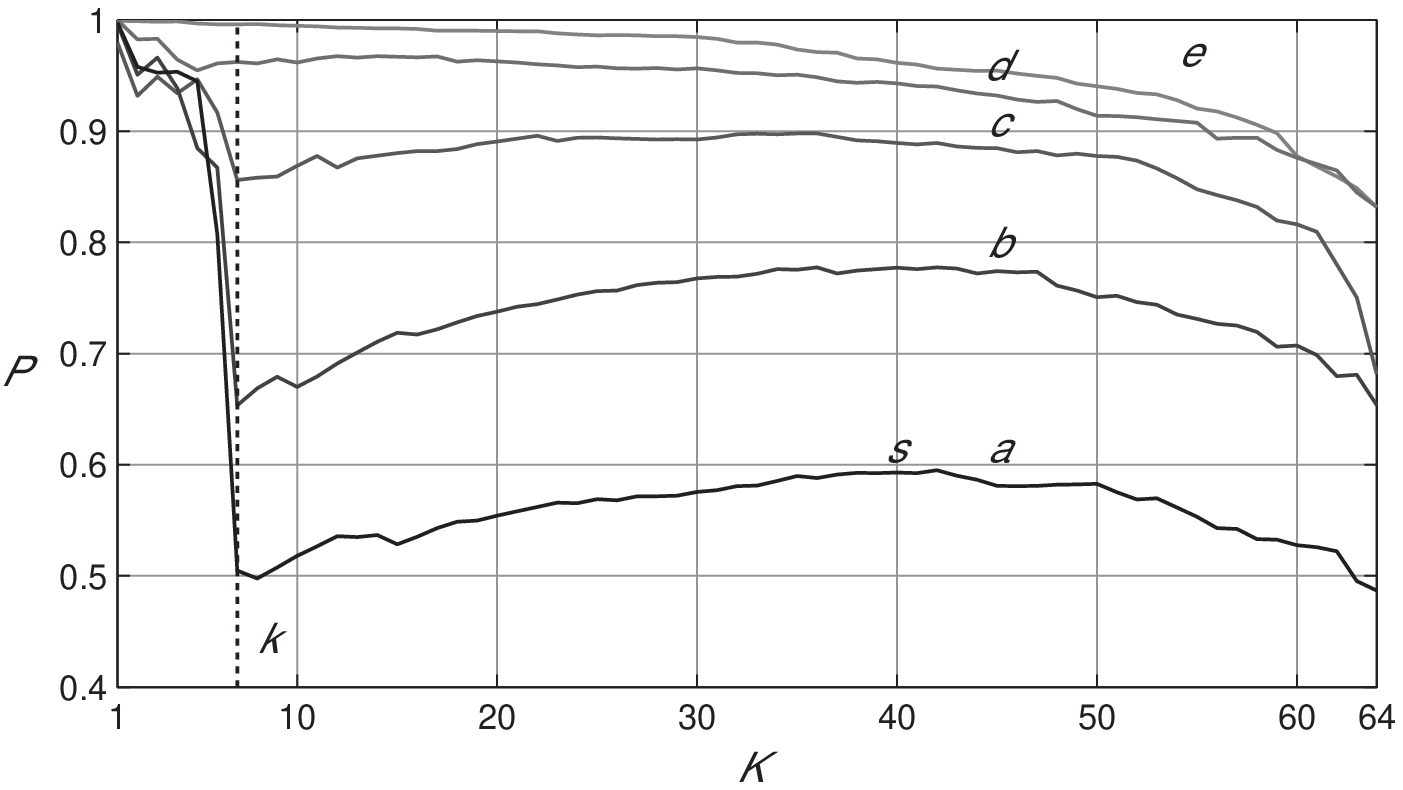}
  }
  \\\ \\ b)\\
  { 
    \psfrag{O}[cB][cB]{true value, $K=7$}
        \psfrag{k}[cB][cB]{underestimate}
    \psfrag{y}[c][c]{$K_{\mathrm{est}}$}
    \psfrag{x}[ct][ct]{SNR [dB]}
    \psfrag{c}{$\underset{\mathrm{EQ gain}}{\mathrm{dB}}$}
    \includegraphics[width=0.85\columnwidth]{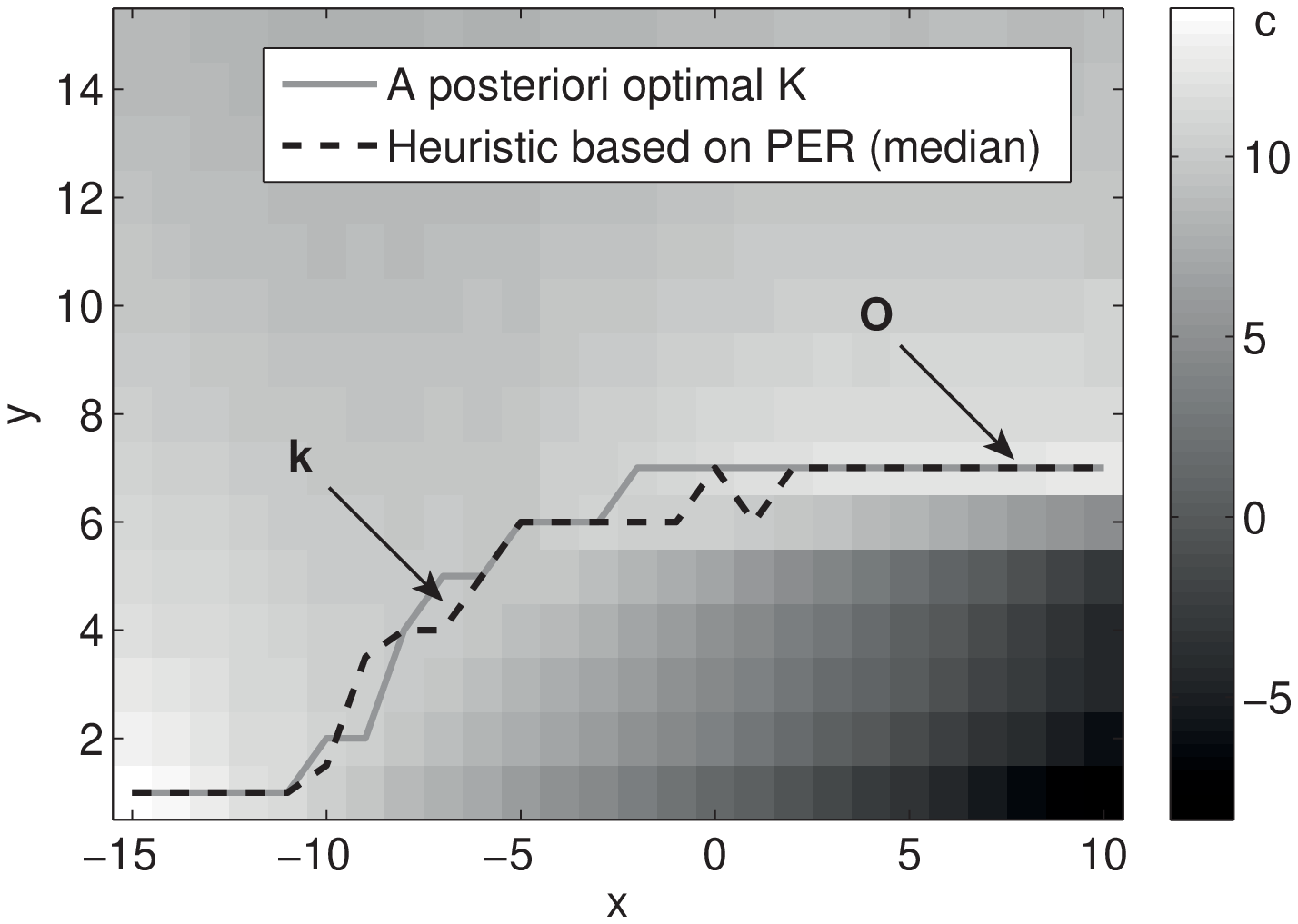}
  }
  \\\ \\ c)\\
  { 
    \psfrag{y}[c][c]{output SNR [dB]}
    \psfrag{x}[ct][ct]{input SNR [dB]}
    \includegraphics[width=0.85\columnwidth]{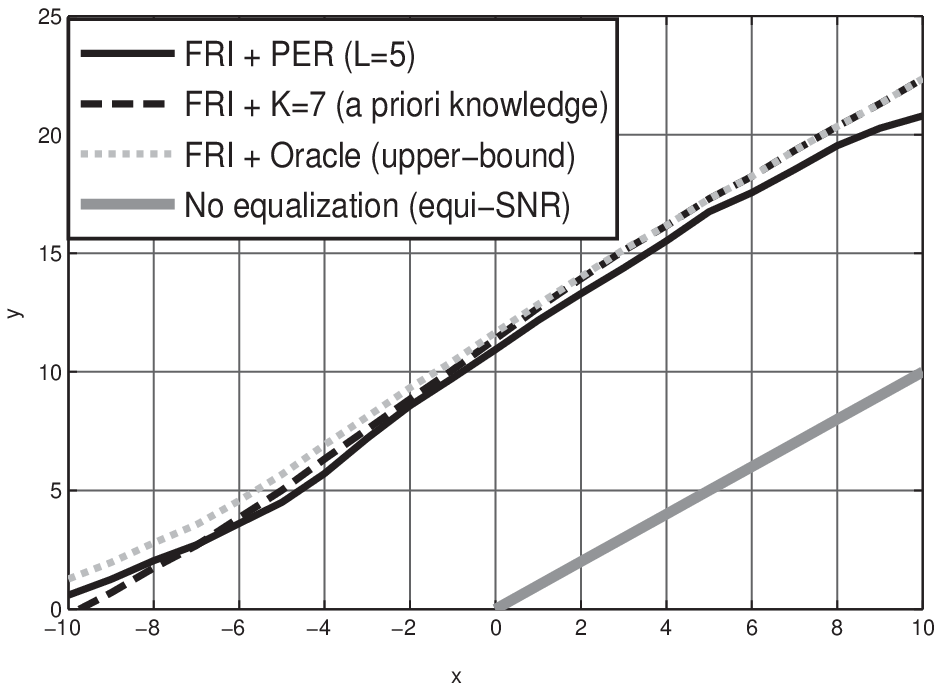}
    }}
\caption{Simulation on a signal with $7$ components. The PER curves in
(a) show a clear inflection  at $K=7$. As the SNR diminishes, the
inflection occurs at lower values of $K$ and completely disappears at
SNRs $<-10$dB. The graph b) empirically shows that the PER indicates
the number of components significantly above noise level which can be
reliably estimated: the value
of $K$ it predicts follows (in median) the one minimizing the channel
estimation error (energy of the residual). Finally, in c) the 
estimation performances obtained with the PER closely follow the one
obtained with an oracle choosing a posteriori the optimal number of
components. We see also that below $-5$dB the underestimation of $K$
by the PER is beneficial compared to the true value
$7$. At high SNR,
the PER sometimes overestimate the number of paths, mitigating the performances. }
\label{fig:perperf}
\end{figure}

We can now introduce the partial effective rank

\begin{definition}
For $\mat A$ and $\mathcal H$ as in Definition~\ref{def:erank} and 
\begin{equation*}
\label{eq:17}
p_{K,k}\eq\sigma_{k}/\|\vct\sigma_{1:K}\|_{1}\ ,\quad
k\eq 1,\ \dots,\ K\leq M,
\end{equation*}
the \emph{Partial Effective Rank} (PER) is
\begin{equation*}
\label{eq:23}
\mathrm{PER}_{K}(\mat A)\eq e^{\mathcal H(p_{K,1},\dots,p_{K,K})}.
\end{equation*}
\end{definition}

The PER verifies
\begin{proposition}
\begin{equation*}
\label{eq:24}
0\leq \mathrm{PER}_{K+1}(\mat A)-\mathrm{PER}_{K}(\mat A)\leq 1.
\end{equation*}
\end{proposition}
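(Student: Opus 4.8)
The plan is to reduce both inequalities to a single one–parameter analysis by writing $\mathrm{PER}_{K+1}$ in terms of $\mathrm{PER}_K$. Set $S_K=\sum_{k=1}^K\sigma_k=\|\vct\sigma_{1:K}\|_1$, let $\alpha=S_K/S_{K+1}\in(0,1]$, and let $\beta=1-\alpha=\sigma_{K+1}/S_{K+1}$. The top-$K$ weights merely rescale, $p_{K+1,k}=\alpha\,p_{K,k}$ for $k\le K$, while the new atom carries mass $p_{K+1,K+1}=\beta$. Substituting into the entropy and using $\sum_{k}p_{K,k}=1$, I would obtain the grouping identity
\[
\mathcal H(p_{K+1,1},\dots,p_{K+1,K+1})=\alpha\,\mathcal H(p_{K,1},\dots,p_{K,K})+\mathcal H(\alpha,1-\alpha),
\]
so that, writing $r=\mathrm{PER}_K(\mat A)$,
\[
\mathrm{PER}_{K+1}(\mat A)=r^{\,\alpha}\,e^{\mathcal H(\alpha,1-\alpha)}=r^{\,1-\beta}\,e^{\mathcal H(\beta,1-\beta)}=:g(\beta).
\]

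For the upper bound I fix the top-$K$ spectrum (hence $r$) and study $g$ as a function of $\beta$ alone. I would show $\log g(\beta)=(1-\beta)\log r+\mathcal H(\beta,1-\beta)$ is concave, since $\tfrac{d^2}{d\beta^2}\log g=-\tfrac1\beta-\tfrac1{1-\beta}<0$, and that its derivative $\tfrac{d}{d\beta}\log g=\log\frac{1-\beta}{\beta r}$ vanishes at $\beta^\star=1/(r+1)$. A direct substitution then gives $\log g(\beta^\star)=\log(r+1)$, i.e. $g(\beta^\star)=r+1$. Concavity with this unique interior maximizer forces $g(\beta)\le r+1$ for every admissible $\beta$, which is exactly $\mathrm{PER}_{K+1}-\mathrm{PER}_K\le 1$. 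This optimization step — recognizing that the worst case over the unknown tail mass $\sigma_{K+1}$ is attained at $\beta^\star=1/(r+1)$ and evaluates to precisely $r+1$ — is the one non-routine idea and the place I expect to spend the most care.

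For the lower bound I would use that the $\sigma_k$ are sorted. Since $\sigma_{K+1}\le\sigma_k$ for all $k\le K$, we have $K\sigma_{K+1}\le S_K$, hence $\beta=\sigma_{K+1}/(S_K+\sigma_{K+1})\le 1/(K+1)$. Combined with the elementary fact $r=\mathrm{PER}_K\le K$ (the perplexity of a distribution on $K$ atoms is maximized, at value $K$, by the uniform one), this yields $\beta r\le K/(K+1)<1$. Consequently
\[
\log g(\beta)-\log r=\mathcal H(\beta,1-\beta)-\beta\log r=-(1-\beta)\log(1-\beta)-\beta\log(\beta r)\ge 0,
\]
because $0<1-\beta\le 1$ makes the first term non-negative and $\beta r<1$ makes the second one non-negative. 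Hence $g(\beta)\ge r$, i.e. $\mathrm{PER}_{K+1}\ge\mathrm{PER}_K$. The degenerate case $\sigma_{K+1}=0$ gives $\beta=0$ and equality throughout, so both bounds are attained as extreme cases.

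In summary, the whole statement follows from the multiplicative recursion $\mathrm{PER}_{K+1}=g(\beta)$ together with the concavity of $\log g$: the apex of $g$ delivers the right-hand inequality, and the sorted-order constraint $\beta\le 1/(K+1)$ together with $r\le K$ confines $\beta$ to the region where $g\ge r$, delivering the left-hand inequality. The only genuinely delicate point is the exact evaluation $g(\beta^\star)=r+1$; everything else is elementary differentiation and sign-checking.
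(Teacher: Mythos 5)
Your proof is correct. Note that the paper itself offers no proof of this proposition --- it only states the equality conditions and explicitly defers ``a formal study of the PER to an upcoming report'' --- so there is nothing to compare against; your argument is a self-contained derivation that the paper omits. The key steps all check out: the grouping identity $\mathcal H(p_{K+1,\cdot})=\alpha\,\mathcal H(p_{K,\cdot})+\mathcal H(\alpha,1-\alpha)$ is the standard entropy chain rule and yields the clean recursion $\mathrm{PER}_{K+1}=r^{1-\beta}e^{\mathcal H(\beta,1-\beta)}$; the unconstrained maximization over $\beta$ correctly evaluates to $r+1$ at $\beta^\star=1/(r+1)$ (and relaxing the ordering constraint is harmless for an upper bound, while the bound remains attained at the uniform spectrum, where $\beta=1/(K+1)=\beta^\star$ since $r=K$); and the lower bound correctly exploits the two constraints that are actually needed --- the sorted order giving $\beta\le 1/(K+1)$ and the perplexity bound $r\le K$ --- without which monotonicity would genuinely fail (e.g.\ $g(\beta)\to 1<r$ as $\beta\to 1$). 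Your derivation also recovers, as byproducts, both equality conditions asserted in the paper ($\beta=0$ for the lower bound, $\sigma_1=\cdots=\sigma_{K+1}$ for the upper bound).
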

The lower bound $0$ is reached if and only if $\sigma_{K+1}=0$ and the
upper bound $1$ if and only if $$\sigma_{1}=\sigma_{2}=\cdots=\sigma_{K+1}.$$

The increase of the partial effective rank with $K$ reflects how
``significant'' is the $K^{\mathrm{th}}$ principal component of $\mat
A$ compared to the previous ones.

\figurename~\ref{fig:perperf} provides empirical evidences that the evolution of the
PER during the Lanczos process provides a suitable criterion to
estimate $K$.

We settle on a very simple heuristic to estimate $K$ based on the
PER. We choose the smallest $K$ such that
$$\mathrm{PER}_{K}(\mat T) - \frac{1}{L}\left(\mathrm{PER}_{K+1}(\mat T) + \cdots + \mathrm{PER}_{K+L}(\mat T)  \right)
\leq 0,$$
which is a very simple ``positive slope'' detector.
It introduces a small overhead: $K+L$ dimensions are required to
decided if the signal space is of dimension $\leq K$.

Despite its simplicity and its heuristic motivations, this criterion
proved to be robust in practice as shown in \figurename~\ref{fig:perperf}.

\skel{
\begin{itemize}
\item the signal subspace spectrum is gradually uncovered during the lanczos
  process 
\item no knowledge of the spectrum as a whole (signal + noise) $\Rightarrow$
  additional ``bootstrap'' cost for MDL, AIC and others 
\item PER: partial essential rank 
\item estimate the ``essential'' rank of current uncovered signal
  space. 
\item observation: the evolution of the PER has an irregularity when the extra
  dimension added to the signal space is mostly made of noise 
\item monitor the evolution of the PER during the lanczos process:
  $\mathcal O \left( K \right)$ iterations for $N >> K$ (cite xu and
  kailath for convergence proof)
\end{itemize}}
Too many components and/or large modelization error indicate the
channels are not sparse.

\section{Numerical results}
\subsection{accuracy: FRI vs CS vs lowpass}
\subsubsection{Setup}
We use the ``FTW rural'' dataset to compare aforementioned algorithms
on the SCS channel estimation problem. The transmitter is
a mobile single antenna device\footnote{Only the first Tx antenna is used.}. The receiver is a ``base-station'' with $P=8$ receiving
antennas. A total of $251$ frames are transmitted on a $120MHz$ wide
channel with a $2GHz$ carrier frequency. The transmission has a high SNR, so we consider the samples
to be the ground truth (infinite SNR). Various SNR conditions are
simulated with the addition of AWGN to the samples.

The DFT pilots are uniformly laid-out every $D=3$ DFT bin. The estimated
channel is used to demodulate 4-PSK coded data symbols occupying the
left over DFT bins, and the obtained \emph{Symbol Error Rate} is the
quality metric used to benchmark the estimation.

\subsubsection{Interpretation of the results}

\begin{figure*}[!t]
  \centering
  {\footnotesize
      \psfrag{T}[c][c]{a) Average over time (all snapshots)}
    \psfrag{S}[c][c]{b) Average over SNR ($-15$dB to $0$dB)}
    \psfrag{y}[c][c]{SER}
     \psfrag{t}[c][c]{tunnel}
    \psfrag{F}[lc][lc]{FRI-PERK}
    \psfrag{L}[lc][lc]{Lowpass interp.}
    \psfrag{R}[lc][lc]{RA-ORMP}    
    \psfrag{x}[c][c]{Time [s]}
    \psfrag{z}[c][c]{SNR [dB]}
    \includegraphics[width=\textwidth]{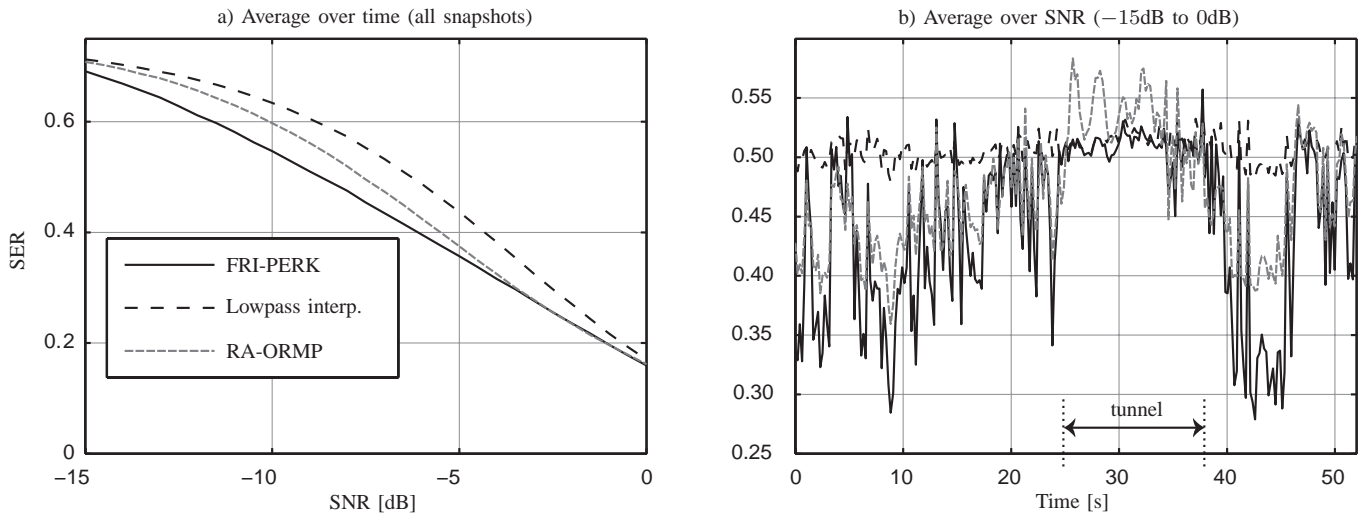}
    }
\caption{From a) we conclude that sparse recovery lowers the SER compared to non-sparse
  recovery below $0$dB of SNR. If we look at the SER over time in b),
  we see that FRI-PERK is robust in the sense that if the input signal
is not sparse, its performs approximately as well as a non-sparse
recovery.}
\label{fig:avgSER}
\end{figure*}

\begin{figure*}[!t]
  \centering
  {\footnotesize
    \psfrag{a}[c][c]{a)}
    \psfrag{b}[c][c]{b)}
    \psfrag{c}[c][c]{c)}
    \psfrag{m}[c][c][1][90]{magnitude}
    \psfrag{t}{Time [sample]}
    \psfrag{g}[c][c]{original CIR}
    \psfrag{n}[c][c]{noisy measurements}
    \psfrag{F}[c][c]{FRI-PERK estimation}
    \psfrag{R}[c][c]{RA-ORMP estimation}
  \includegraphics[width=\textwidth]{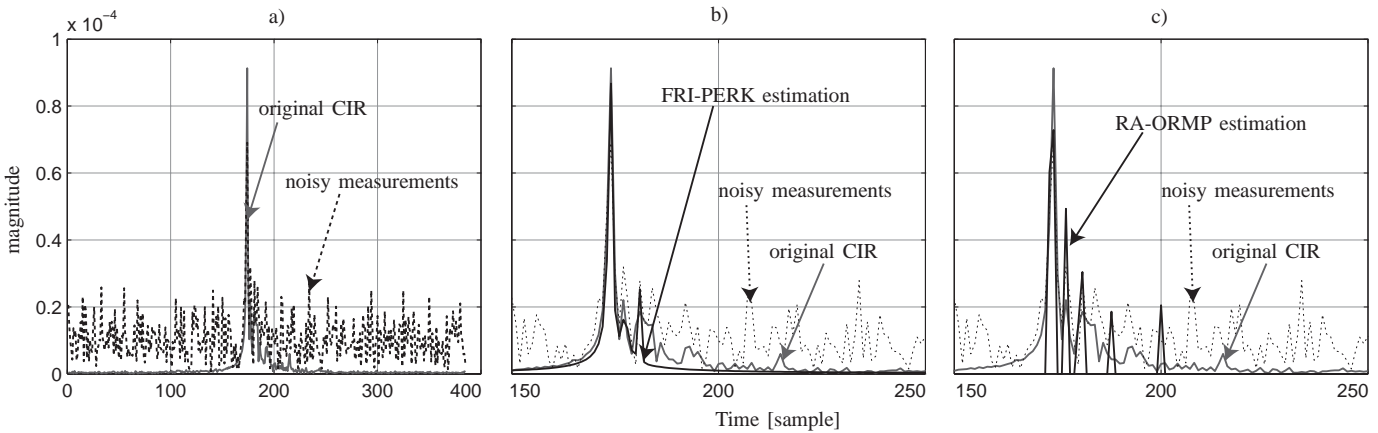}
  }
\caption{This figure compares the estimation result of FRI-PERK and
  RA-ORMP. The input signal is the first frame received at the first
  antenna corrupted with AWGN to obtain $-5$dB of SNR. Panel (a) shows
the measurements and the original CIR. Panel (b) shows a portion of
interest of the CIR estimated with FRI-PERK. The PER criterion
estimates $K=3$, and by visual inspection the three signal components
found match the largest ones of the original signal. Panel (c) shows
the result obtained with RA-ORMP. The discrete sparsity model causes the
estimation to be more sensitive to uncorrelated noise, as spurious
spikes contributed by noise are estimated as signal components.}
\label{fig:F1P1}
\end{figure*}

From \figurename~\ref{fig:avgSER} we may conclude that

\begin{itemize}
\item The channels do not exactly fit the SCS model, therefore the
  modelization error becomes larger than the noise at high SNR
\item The SCS property helps in lowering the symbol error rate at
  medium to low SNR (below 0 dB)
\item The ``sparsity'' model assumed by FRI (few reflections) match
  the field measurements better than the one assumed by CS (few non-0
  coefficients) as seen in \figurename~\ref{fig:F1P1}.
\item Any algorithm exploiting sparsity must be ``\emph{introspective}'',
  i.e. it must detect when sparsity does not occur, and fall-back to a
  more traditional method whenever it happens. It is exemplified by
  the stroll through the tunnel.
  \end{itemize}

\skel{
\begin{itemize}
\item guess sparsity level using the PER for FRI 
\item use REmBo without knowledge of sparsity: run several times,
  adopt consensus. 
\item run on a synthetic channel: FRI is expected to do very well
  since the model matches perfectly 
\item field data simulation: proof by fire!
\end{itemize}}

\subsection{Computational benchmark}

Both FRI-PERK\footnote{Uses the ARPACK library
  \cite{Lehoucq1998}. Since the size of the Krylov subspace dimension
  must be fixed beforehand, it does not has the ability
  to stop before $K_{\max}$ is reached. } and RA-ORMP were implemented in
MATLAB\footnote{v. R2011b for MacOS X,
 running on a 1.8GHz Intel Core I7 processor and enough 1.3 GHz DDR3 memory}.
Because of the decimation in frequency and the limitation of the
delay-spread in time, the projection in RA-ORMP cannot be realized
with simple FFTs, results are reported in  \figurename~\ref{fig:speed}.(c-d).

If this is not the case, we implemented a fast
version of it and reported results in
\figurename~\ref{fig:speed}.(a-b). Note that such a setup does not apply
to most standards \cite{3gpplte,dvbt}.

\begin{figure*}[!t]
  \centering {\footnotesize
    \psfrag{a}[Bc][Bc]{(a)}
    \psfrag{b}[Bc][Bc]{(b)}
    \psfrag{c}[Bc][Bc]{(c)}
    \psfrag{d}[Bc][Bc]{(d)}
    \psfrag{s}[Bc][Bc]{FRI-PER}

    \psfrag{k}[Bc][Bc]{FRI-PERK}
    \psfrag{n}[Bc][Bc]{RA-ORMP}
    \psfrag{o}[Bc][Bc]{fast RA-ORMP}
    \psfrag{P}[Bc][Bc]{$2\times$RA-ORMP}
    \psfrag{O}[Bc][Bc]{$4\times$fast RA-ORMP}
    \psfrag{T}[cc][cc][1][90]{Time [s]}
    \psfrag{C}[cc][cc][1][-90]{\textbf{Contiguous pilots}}
    \psfrag{D}[cc][cc][1][-90]{\textbf{Scattered pilots}}
    \psfrag{B}[cc][cc]{\textbf{Execution time}}
    \psfrag{E}[cc][cc]{\textbf{Equalization gain}}
    \psfrag{S}[cc][cc][1][90]{SNR gain [dB]}
    \psfrag{M}[ct][ct]{$M+1$ (half number of pilots)}

    \includegraphics[width=\textwidth]{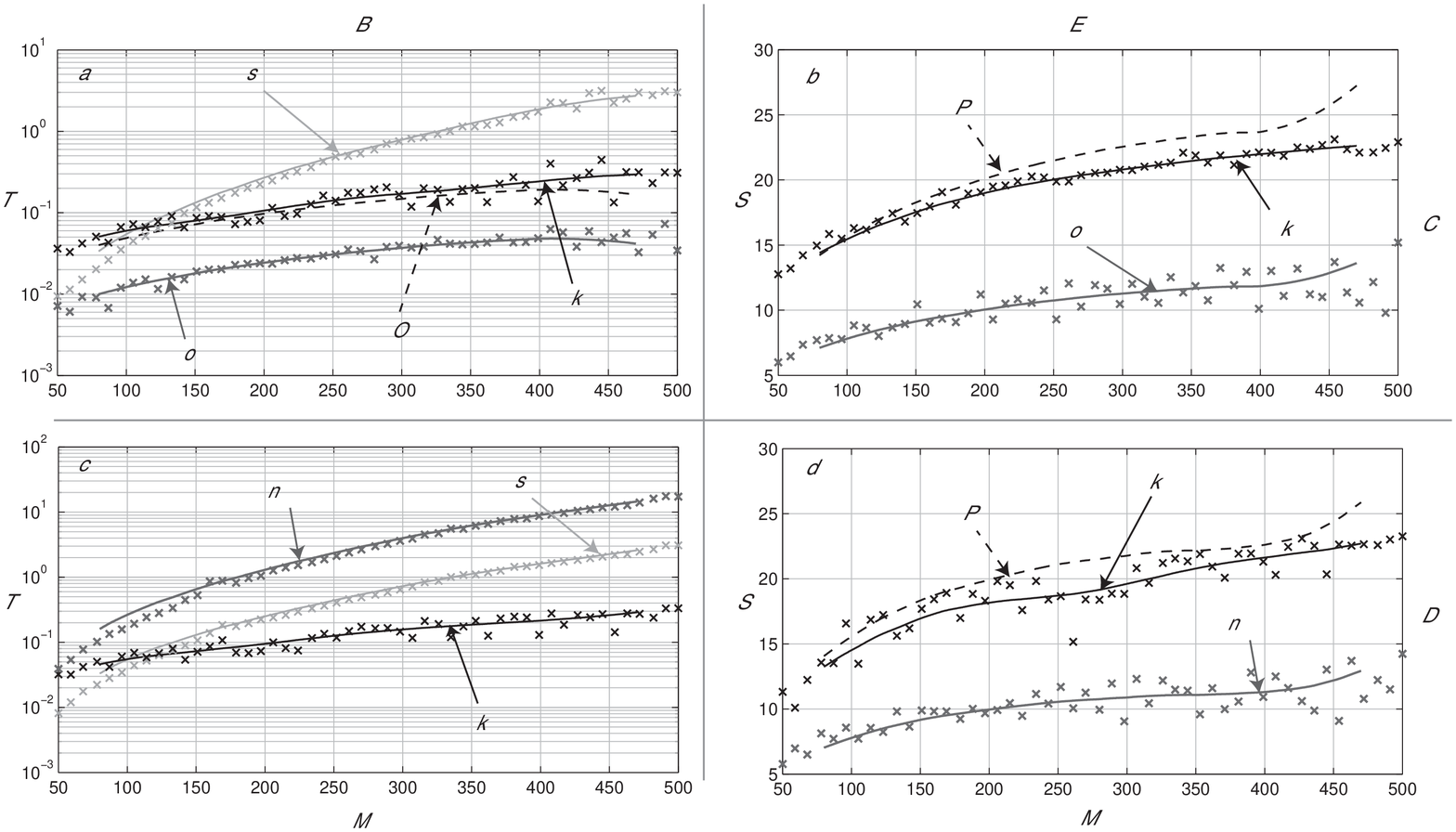}
    }
\caption{A benchmark is run for pilot sequences of length $2M+1=101$
  to $2M+1=1001$. The top row compares FRI-PERK with FRI-PER --- same
  algorithm using a full SVD instead of Krylov subspace projection ---
  and a fast implementation of RA-ORMP using FFTs, which is possible
  since the pilots are contiguous in frequency ($D=1$). The bottom row
follows the same procedure, but with scattered pilots ($D=3$) and a
delay-spread smaller than the frame length. There
is no straightforward fast implementation of RA-ORMP in this case.}
\label{fig:speed}
\vspace{-4mm}
\end{figure*}

Definitive conclusions for hardware implementations cannot be drawn
from these experiments, however we can see that the low asymptotic
complexity of FRI-PERK is not misleading since it provides
improvements for a number of pilots which are encountered in practice
\cite{3gpplte,dvbt} (higher bandwidth modes). For small numbers of
pilots, the direct implementation maybe faster, though we did not used
the tracking capacity of Lanczos algorithm used by choosing an
initial vector lying in the signal space obtained at the previous step
\cite{Xu1994}.

\skel{
\begin{itemize}
\item MATLAB implementation favorizes the plain implementation, but
  some speed improvements.
\item sound algorithmic principles (low-memory, standard dsp
  blocks,...)  $\Rightarrow$ should be fast on embedded hardware.
\end{itemize}}

\section{Conclusion}
In this study, we addressed computational and robustness issues of FRI
techniques applied to channel estimation. The tests conducted on field
measurements show the SCS assumption is relevant at medium to low
SNR where the noise power exceeds the one of model mismatch, and can
be used to substantially lower the SER.

The PER criterion used to estimate the sparsity level gives
satisfactory results, but an upcoming thorough analytical study is required. 

Comparison with discrete techniques from compressed sensing indicate a
trade-off between speed and accuracy compared to FRI-PERK. 
Using a finer discretization (\emph{frames}) allows one to vary this
trade-off, and shall be tested in the future. We believe that with a
proper discretization similar speed/accuracy results shall be met.

\ifCLASSOPTIONcaptionsoff
  \newpage
\fi

\bibliographystyle{IEEEtran}

\bibliography{IEEEabrv,./jetcas}

\vspace{-10mm}
\begin{biography}[{\includegraphics[width=1in,height=1.25in,clip,keepaspectratio]{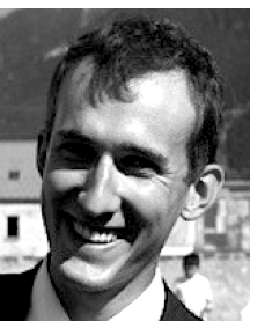}}]{Yann
  Barbotin} was born in Ambilly, France in 1985. He received the B.Sc. and
M.Sc. degrees in Communication Systems from the Swiss Federal Institute
of Technology, Lausanne (EPFL) in 2006 and 2009, respectively. As an
intern  at Qualcomm Inc. in 2009 and an ongoing scientific
collaboration he is the co-inventor of five patents on sparse sampling. He is
now a PhD candidate in the LCAV laboratory at EPFL. His research
interests are in fundamental limitations of estimation problems and sampling.
\end{biography}
\begin{biography}[{\includegraphics[width=1in,height=1.25in,clip,keepaspectratio]{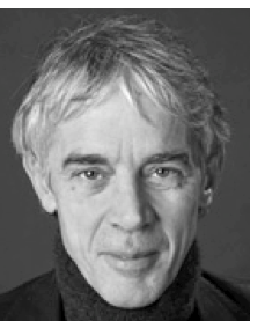}}]{Martin
    Vetterli}
  Martin Vetterli was born in 1957 and grew up near Neuchâtel. He received the Dipl. El.- Ing. degree from Eidgenössische Technische Hochschule (ETHZ), Zurich, in 1981, the Master of Science degree from Stanford University in 1982, and the Doctorat ès Sciences degree from the Ecole Polytechnique Fédérale, Lausanne, in 1986.
After his dissertation, he was an Assistant and then Associate Professor in Electrical Engineering at Columbia University in New York, and in 1993, he became an Associate and then Full Professor at the Department of Electrical Engineering and Computer Sciences at the University of California at Berkeley. In 1995, he joined the EPFL as a Full Professor. He held several positions at EPFL, including Chair of Communication Systems and founding director of the National Competence Center in Research on Mobile Information and Communication systems (NCCR-MICS). From 2004 to 2011 he was Vice President of EPFL and since March 2011, he is the Dean of the School of Computer and Communications Sciences.
He works in the areas of electrical engineering, computer sciences and applied mathematics. His work covers wavelet theory and applications, image and video compression, self-organized communications systems and sensor networks, as well as fast algorithms, and has led to about 150 journals papers. He is the co-author of three textbooks, with J. Kovacevic, "Wavelets and Subband Coding" (Prentice-Hall, 1995), with P. Prandoni, "Signal Processing for Communications", (CRC Press, 2008) and with J. Kovacevic and V. Goyal, of the forthcoming book "Fourier and Wavelet Signal Processing" (2012).
His research resulted also in about two dozen patents that led to technology transfers to high-tech companies and the creation of several start-ups.
His work won him numerous prizes, like best paper awards from EURASIP in 1984 and of the IEEE Signal Processing Society in 1991, 1996 and 2006, the Swiss National Latsis Prize in 1996, the SPIE Presidential award in 1999, the IEEE Signal Processing Technical Achievement Award in 2001 and the IEEE Signal Processing Society Award in 2010. He is a Fellow of IEEE, of ACM and EURASIP, was a member of the Swiss Council on Science and Technology (2000-2004), and is an ISI highly cited researcher in engineering.
\end{biography}

\end{document}